\newcommand{\FBV}{
\bigvee
}
\newcommand{\FBW}{
\bigwedge
}
\newcommand{\FBC}{
\bigcup
}
\title{Completeness of coalition logics with seriality, independence of agents, or determinism}
\author{
Yinfeng Li${}^{1,2}$ and Fengkui Ju${}^{3,4}$\footnote{Corresponding author} \vspace{5pt} \\
{\small {$^1$IRIT-CNRS, University of Toulouse, France}} \\
{\small {$^2$\href{mailto:yinfeng.li@irit.fr}{yinfeng.li@irit.fr}}} \vspace{2.5pt} \\
{\small {$^3$School of Philosophy, Beijing Normal University, China}} \\
{\small {$^4$\href{mailto:fengkui.ju@bnu.edu.cn}{fengkui.ju@bnu.edu.cn}}}
}
\date{
}
\begin{document}

\maketitle

\setlength{\parskip}{0.5em}


\begin{abstract}
\noindent Coalition Logic is a central logic in logical research on strategic reasoning.
In a recent paper, Li and Ju argued that generally, models of Coalition Logic, concurrent game models, have three too strong assumptions: seriality, independence of agents, and determinism. They presented a Minimal Coalition Logic based on general concurrent game models, which do not have the three assumptions.
However, when constructing coalition logics about strategic reasoning in special kinds of situations, we may want to keep some of the assumptions. Thus, studying coalition logics with some of these assumptions makes good sense.
In this paper, we show the completeness of these coalition logics in a uniform way.

\medskip

\noindent \textbf{Keywords:} coalition logics, general concurrent game models, seriality, independence of agents, determinism, completeness

\end{abstract}

\section{Introduction}
\label{sec:Introduction}

\subsection{Coalition Logic and Minimal Coalition Logic}

Coalition Logic $\FCL$ (\cite{pauly_logic_2001,pauly_modal_2002}) is a central logic in logical research on strategic reasoning. Many important logics about strategic reasoning are extensions of $\FCL$: Alternating-time Temporal Logic $\FATL$ (\cite{alur_alternating-time_2002}) is a temporal extension of $\FCL$; Strategy Logic $\FSL$ \cite{mogavero_reasoning_2014} is an extension of $\FCL$ whose language has quantifiers for and names of strategies; and so on. We refer to \cite{benthem_models_2015}, \cite{agotnes_knowledge_2015}, and \cite{sep-logic-power-games} for overviews of the area.

The language of $\FCL$ is a modal language with the featured operator $\Fclo{\FAA} \phi$, indicating \emph{some available joint action of the coalition $\FAA$ ensures $\phi$}.
Its models are \emph{concurrent game models}. Roughly, in a concurrent game model:
there are some states;
there are some agents, who can form coalitions;
at every state, every coalition has some available joint actions;
every joint action of a coalition has some possible outcome states.
The formula $\Fclo{\FAA} \phi$ is true at a state in a concurrent game model if $\FAA$ has an available joint action such that $\phi$ is true at every possible outcome state of the action.

Recently, Li and Ju \cite{li2024minimal} argued that, generally, concurrent game models have three too strong assumptions.
The first one is \emph{seriality}: coalitions always have available joint actions.
The second one is \emph{the independence of agents}: the merge of two available joint actions of two disjoint coalitions is always an available joint action of the union of the two coalitions.
The third one is \emph{determinism}: joint actions of the grand coalition always have a unique outcome.

Here, we briefly mention some arguments from \cite{li2024minimal}.
Games often terminate when reaching specific states. In terminating states, players do not have available actions intuitively. Thus, seriality might not hold.
There are situations where whether a coalition can perform an action is conditional on other agents' actions at the same time. The independence of agents fails in these situations. Here is an example. There are two agents in a room, $a$ and $b$, and only one chair. Agent $a$ can sit, and agent $b$ can sit, but they cannot sit at the same time. 
In many situations, some joint actions of all behaving agents have more than one outcome state. The following example, from \cite{sergot_examples_2014}, can illustrate this.
A vase stands on a table. There is an agent $a$ who can lift or lower the end of the table. If the table tilts, the vase might fall, and if it falls, it might break. Therefore, determinism might not hold.

Based on \emph{general concurrent game models}, which do not have the three assumptions, Li and Ju \cite{li2024minimal} presented a Minimal Coalition Logic $\FMCL$.

\subsection{Logics between Coalition Logic and Minimal Coalition Logic}

Although the three assumptions of concurrent game models are generally too strong, we might want to keep some of them for different reasons when constructing strategic logics.

First, consider seriality. Dropping seriality often makes things complicated. Take Alternating-time Temporal Logic $\FATL$ as an example, where \emph{paths}, that are infinite sequences of states, play an important role. If we drop seriality, we have to consider finite paths and infinite paths, which is somewhat complicated. Scholars often want to keep seriality by assuming that agents can always perform the special action \emph{doing nothing}, which seems not harmful.

Second, consider the independence of agents.
In many situations, agents' actions do not interfere with each other, and the independence of agents holds. For example, in the game of rock-paper-scissors, every player can perform three actions: showing rock, paper, and scissors, which do not interfere with each other.
As mentioned by Sergot \cite{sergot_examples_2014}, the special actions of \emph{attempting to do something} do not interfere with each other.
Some literature, such as \cite{harrenstein_boolean_2001} and \cite{hoek_logic_2005}, deals with agency by \emph{propositional control}: agents are able to control truth values of atomic propositions, and every atomic proposition is controlled by at most one agent. In these settings, the independence of agents holds.

Third, consider determinism.
Typical games are designed such that no matter how players behave, there is always a deterministic outcome. Again, the game of rock-paper-scissors is an example.
In addition, when assuming that \emph{nature} is a player, determinism holds, at least on the macro level.

According to which of the three properties we want to keep, there are eight coalition logics in total. It makes good sense to study the other six coalition logics.

\subsection{Our work}

The completeness $\FCL$ is proved by Pauly \cite{pauly_modal_2002}, and the completeness of $\FMCL$ is shown by Li and Ju \cite{li2024minimal}. The completeness of the other six logics has not been shown yet. In this work, we show the completeness of the eight logics in a uniform way by the \emph{reduction method}.

Generally speaking, to show the completeness of a logic, the reduction method goes through the following four steps.
First, we transform a formula to a conjunction of some \emph{standard disjunctions}. 
Second, we show a \emph{downward validity lemma}, which reduces the validity of a standard disjunction to the validity of some formulas with lower modal depth.
Third, we show an \emph{upward derivability lemma}, which reduces the derivability of a standard disjunction to the derivability of some formulas with lower modal depth.
Fourth, we show the completeness by induction.

Among the four steps, the first and fourth steps are straightforward, the third step is relatively easy, and the second step, showing the downward validity lemma, is crucial. The key part of the second step is to obtain a pointed model that works.

In showing the completeness of the eight coalition logics, we show the downward validity lemmas in a uniform way, where we use a special kind of abstract structure, called blueprints, to get a pointed model that works.

Pauly \cite{pauly_modal_2002} showed the completeness of $\FCL$ by the following approach: first, he showed a \emph{representation theorem} and transformed the semantics for $\FCL$ based on concurrent game models to some kind of \emph{neighborhood semantics}; second, he showed the completeness of $\FCL$ with respect to the neighborhood semantics by the \emph{Henkin method}\footnote{
Goranko, Jamroga, and Turrini \cite{goranko_strategic_2013} pointed out that the representation theorem given by Pauly \cite{pauly_modal_2002} is incorrect, and consequently, the neighborhood semantics given by Pauly \cite{pauly_modal_2002} is incorrect. They offered the correct representation theorem and neighborhood semantics. However, they showed that the neighborhood semantics given by Pauly \cite{pauly_modal_2002} determines the same set of valid formulas as the correct neighborhood semantics.
}.
Our approach differs from this approach.

Goranko and van Drimmelen \cite{goranko_complete_2006} showed the completeness of $\FATL$. As mentioned above, $\FATL$ is a temporal extension of $\FCL$. Their proof contains a proof for the completeness of $\FCL$, which, roughly speaking, is based on (finite) maximal consistent sets. Our proof is different from this proof.

Li and Ju \cite{li2024minimal} proved the completeness of $\FMCL$ by the reduction method, too. The way the downward validity lemmas for the eight coalition logics are shown in this paper is different from how the downward validity lemma for $\FMCL$ is shown in \cite{li2024minimal}. To be more specific, in the latter, we directly constructed a working pointed model (without using blueprints), which heavily depends on the fact that for Minimal Coalition Logic, we do not have to consider seriality, independence of agents, or determinism at all.

The rest of the paper is organized as follows.
In Section \ref{section:Eight coalition logics determined by eight classes of general concurrent game models}, we present the eight coalition logics, including their models, language, semantics, and axiomatic systems.
In Section \ref{section:Completeness of the eight coalition logics}, we show the completeness of the eight logics.
We point out some further work in Section \ref{section:Further work}.

\section{A family of eight coalition logics}
\label{section:Eight coalition logics determined by eight classes of general concurrent game models}

\subsection{General concurrent game models and three properties}

Let $\FAG$ be a finite set of agents. Subsets of $\FAG$ are called \Fdefs{coalitions}. $\FAG$ is called the \Fdefs{grand coalition}.
Let $\FAC$ be a nonempty set of actions. For every coalition $\FCC$, a function $\ja{\FCC}:\FCC \to \FAC $ is called a \Fdefs{joint action} of $\FCC$. The joint actions of $\FAG$ are called \Fdefs{action profiles}.
For every $\FCC \subseteq \FAG$, we define $\FJA_\FCC = \{\sigma_\FCC \mid \sigma_\FCC: \FCC \rightarrow \FAC\}$, which is the set of joint actions of $\FCC$.
We define $\FJA = \bigcup \{\FJA_\FCC \mid \FCC \subseteq \FAG\}$.

\begin{definition}[General concurrent game frames]
\label{definition:??}

A \Fdefs{general concurrent game frame} is a tuple $\F = (\FST, \FAC, \Fout_\FAG)$, where:
\begin{itemize}
\item $\FST$ is a nonempty set of states;
\item $\FAC$ is a nonempty set of actions;
\item $\Fout_\FAG: \FST \times \FJA_{\FAG} \rightarrow \mathcal{P}(\FST)$ is an \Fdefs{outcome function} for $\FAG$.
\end{itemize}

\end{definition}

General concurrent game frames will be simply called \Fdefs{frames} in the sequel.

\begin{definition}[Outcome functions and availability functions]
\label{definition:??}

Let $\F = (\FST, \FAC, \Fout_\FAG)$ be a frame.

For every $\FCC \subseteq \FAG$, define the \Fdefs{outcome function} $\Fout_\FCC$ for $\FCC$ as follows: for all $s\in \FST$ and $\ja{\FCC}\in \FJA_{\FCC}$,
\[
\Fout_\FCC (s,\ja{\FCC}) = \bigcup \{ \Fout_\FAG (s,\ja{\FAG}) \mid \ja{\FAG}\in \FJA_{\FAG} \text{ and } \ja{\FCC}\subseteq \ja{\FAG}\}.
\]

For every $\FCC \subseteq \FAG$, define the \Fdefs{availability function} $\Faja_\FCC$ for $\FCC$ as follows: for all $s \in \FST$ and $\FCC \subseteq \FAG$:
\[\Faja_\FCC (s)= \{\ja{\FCC}\in \FJA_{\FCC} \mid \Fout_\FCC (s,\ja{\FCC}) \neq \emptyset\}.
\]

\end{definition}

The elements of $\aja{s,\FCC}$ are called \Fdefs{available joint actions} of $\FCC$ at $s$.

\begin{definition}[Three properties of frames]
\label{definition:??}

Let $\F = (\FST, \FAC, \Fout_\FAG)$ be a frame. We say:
\begin{itemize}

\item

$\F $ is \Fdefs{serial} if
$\Faja_\FCC (s)\neq \emptyset $ for all $s \in \FST$ and $\FCC \subseteq \FAG$;

\item 

$\F$ is \Fdefs{independent} if
for all $s\in \FST$, $\FCC, \FDD \subseteq \FAG$, $\ja{\FCC} \in \Faja_\FCC (s)$, and $\ja{\FDD} \in \Faja_\FDD (s)$, if $\FCC\cap \FDD = \emptyset $, then $\ja{\FCC} \cup \ja{\FDD} \in \Faja_{\FCC \cup \FDD} (s)$;

\item 

$\F$ is \Fdefs{deterministic} if
for all $s\in \FST$ and $\ja{\FAG}\in \Faja_\FAG (s)$, ${\Fout_\FAG (s, \ja{\FAG})}$ is a singleton.

\end{itemize}

\end{definition}

\newcommand{\FES}{\mathrm{ES}}

We let the three properties correspond to the symbols $\FSX$, $\FIX$, and $\FDX$, respectively.
We let the eight combinations of the three properties correspond to the eight strings $\epsilon$, $\FSX$, $\FIX$, $\FDX$, $\FSX\FIX$, $\FSX\FDX$, $\FIX\FDX$, and $\FSX\FIX\FDX$, respectively.
We use $\FES$ to indicate the set of the eight strings.

For any $\FXX \in \FES$, we say a frame $\F$ is an \Fdefs{$\FXX$-frame} if $\F$ has the properties corresponding to $\FXX$.

Fix a countable set of atomic propositions $\AP$.

\begin{definition}[General concurrent game models]
\label{definition:General concurrent game models}

A \Fdefs{general concurrent game model} is a tuple $\MM = (\FST, \FAC, \Fout_\FAG, \Flab)$, where:
\begin{itemize}

\item

$(\FST, \FAC, \Fout_\FAG)$ is a frame;

\item

$\Flab: \FST \rightarrow \mathcal{P}(\FAP)$ is a \Fdefs{labeling function}.

\end{itemize}

\end{definition}

General concurrent game models will simply be called \Fdefs{models} in the sequel. For any model $\MM$ and state $s$ of $\MM$, $(\MM,s)$ is called a \Fdefs{pointed model}.

For any $\FXX \in \FES$, a model $\MM = (\FST, \FAC, \Fout_\FAG, \Flab)$ is called an \Fdefs{$\FXX$-model} if $(\FST, \FAC, \Fout_\FAG)$ is an $\FXX$-frame.
Note that $\epsilon$-models are models of $\FMCL$ and $\FSX\FIX\FDX$-models are models of $\FCL$.

\subsection{Language}

\begin{definition}[The language $\Phi$]
\label{definition:The language Phi CL}

The language $\Phi$ is defined as follows, where $p$ ranges over $\FAP$ and $\FCC\subseteq \FAG$:
\[
\phi ::=\top \mid p \mid \neg \phi \mid (\phi \wedge \phi) \mid \Fclo{\FCC} \phi
\]

\end{definition}

The formula $\Fclo{\FCC} \phi$ indicates that \emph{some available joint action of $\FCC$ ensures $\phi$}.
Define the propositional connectives $\bot, \lor, \rightarrow$ and $\leftrightarrow$ as usual.
Define the dual $\Fclod{\FCC} \phi$ of $\Fclo{\FCC} \phi$ as $\neg \Fclo{\FCC} \neg \phi$, indicating \emph{every available joint action of $\FAA$ enables $\phi$}.

\subsection{Semantics}

\begin{definition}[Semantics of $\Phi$]
\label{definition:Semantics of Phi CL}

Let $\MM = (\FST, \FAC, \Fout_\FAG, \Flab)$ be a model.

\begin{tabular}{lll}
$\MM, s \Vdash \top$ & & \parbox[t]{28em}{always} \\
$\MM, s \Vdash p$ & $\Leftrightarrow$ & \parbox[t]{28em}{$p \in \Flab (s)$} \\
$\MM, s \Vdash \neg \phi$ & $\Leftrightarrow$ & \parbox[t]{28em}{not $\MM, s \Vdash \phi$} \\
$\MM, s \Vdash \phi \land \psi$ & $\Leftrightarrow$ & \parbox[t]{28em}{$\MM, s \Vdash \phi$ and $\MM, s \Vdash \psi$} \\
$\MM, s \Vdash \Fclo{\FCC} \phi$ & $\Leftrightarrow$ & \parbox[t]{28em}{there is $\ja{\FCC} \in \Faja_\FCC (s)$ such that for all $t \in \Fout_\FCC (s, \ja{\FCC})$, $\MM, t \Vdash \phi$}
\end{tabular}

\end{definition}

It can be verified:

\begin{tabular}{lll}
$\MM, s \Vdash \Fclod{\FCC} \phi$ & $\Leftrightarrow$ & \parbox[t]{28em}{for all $\sigma_\FCC \in \Faja_\FCC (s)$, there is $t \in \Fout_\FCC (s, \sigma_\FCC)$ such that $\MM, t \Vdash \phi$}
\end{tabular}

For any $\FXX \in \FES$, set of formulas $\Gamma$, formula $\phi$, we define the following notions in the expected way: $\phi$ is \Fdefs{$\FXX$-valid} ($\models_\FXL \phi$), $\phi$ is \Fdefs{$\FXX$-satisfiable}, and $\phi$ is a \Fdefs{logical consequence} of $\Gamma$ ($\Gamma \models_\FXL \phi)$.

Let $(\MM,s)$ be a pointed model and $\ja{\FCC}$ be a (possibly unavailable) joint action at $s$.
We use $\ja{\FCC } \achieve_{(\MM, s)} \phi$ to denote $\ja{\FCC }$ \emph{ensures} $\phi$ at $(\MM,s)$, that is, $\MM,t \Vdash \phi$ for all $t \in \Fout_\FCC (s,\ja{\FCC })$.
We use $\ja{\FCC } \allow_{(\MM,s)} \phi$ to denote that $\ja{\FCC }$ \emph{enables} $\phi$ at $(\MM,s)$, that is, $\MM,t \Vdash \phi$ for some $t \in \Fout_\FCC (s,\ja{\FCC })$.

\subsection{Axiomatic systems}

We let the following formulas respectively correspond to the symbols $\FSX$, $\FIX$ and $\FDX$:

\medskip

\begin{tabular}{rl}
Seriality ($\mathtt{A}\text{-}\mathtt{Ser}$): & $\Fclo{\FCC} \top$ \vspace{5pt} \\
Independence of agents ($\mathtt{A}\text{-}\mathtt{IA}$): & $(\Fclo{\FCC} \phi \land \Fclo{\FDD} \psi) \rightarrow \Fclo{\FCC \cup \FDD} (\phi \land \psi)$, where $\FCC \cap \FDD = \emptyset$ \vspace{5pt} \\
Determinism ($\mathtt{A}\text{-}\mathtt{Det}$): & $\Fclo{\FCC} (\phi \lor \psi) \rightarrow (\Fclo{\FCC} \phi \lor \Fclo{\FAG} \psi)$
\end{tabular}

\begin{definition}[Axiomatic systems for $\FXL$]
\label{definition:Axiomatic systems for XL}

For all $\FXX$ in $\FES$, the axiomatic system for $\FXL$ consists of the following axioms and inference rules:

\noindent \emph{Axioms}:

\vspace{5pt}

\begin{tabular}{rl}
Tautologies ($\mathtt{A}\text{-}\mathtt{Tau}$): & all propositional tautologies \vspace{5pt} \\
No absurd available actions ($\mathtt{A}\text{-}\mathtt{NAAA}$): & $\neg \Fclo{\FCC} \bot$ \vspace{5pt} \\
Monotonicity of goals ($\mathtt{A}\text{-}\mathtt{MG}$): & $\Fclo{\emptyset} (\phi \rightarrow \psi) \rightarrow (\Fclo{\FCC} \phi \rightarrow \Fclo{\FCC} \psi)$ \vspace{5pt} \\
Monotonicity of coalitions ($\mathtt{A}\text{-}\mathtt{MC}$): & $\Fclo{\FCC} \phi \rightarrow \Fclo{\FDD} \phi$, where $\FCC \subseteq \FDD$ \vspace{5pt} \\
$\mathtt{A}\text{-}\FXX$: & the formulas corresponding to the elements of $\FXX$
\end{tabular}

\vspace{5pt}

\noindent \emph{Inference rules}:

\vspace{5pt}

\begin{tabular}{rl}
Modus ponens ($\mathtt{MP}$): & $\dfrac{\phi, \phi \rightarrow \psi}
{\psi}$ \vspace{5pt} \\
Conditional necessitation ($\mathtt{CN}$): & $\dfrac{\phi}
{\Fclo{\FCC} \psi \rightarrow \Fclo{\emptyset} \phi}$
\end{tabular}

\end{definition}

For any $\FXX$ in $\FES$, we use $\vdash_\FXL \phi$ to indicate $\phi$ is \Fdefs{derivable} in the system for $\FXL$.

The following fact will be used later, whose proof is skipped:

\begin{fact}
\label{fact:derivative formulas and rules}

The following formulas and rule are derivable in $\FMCL$:

\medskip

\begin{tabular}{rl}
Condition for empty action ($\mathtt{A}\text{-}\mathtt{CEA}$): & $\Fclo{\FCC} \phi \rightarrow \Fclo{\emptyset} \top$ \\
Special independence of agents ($\mathtt{A}\text{-}\mathtt{SIA}$): & $(\Fclo{\emptyset} \phi \land \Fclo{\FCC} \psi) \rightarrow \Fclo{\FCC} (\phi \land \psi)$ \\
Monotonicity ($\mathtt{Mon}$): & from $\phi \rightarrow \psi$, we can get $\Fclo{\FCC} \phi \rightarrow \Fclo{\FDD} \psi$, \\
& where $\FCC \subseteq \FDD$. 
\end{tabular}

\end{fact}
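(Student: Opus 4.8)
The statement to prove is Fact~\ref{fact:derivative formulas and rules}, which asserts three derivabilities in $\FMCL$: $\mathtt{A}\text{-}\mathtt{CEA}$ ($\Fclo{\FCC}\phi \rightarrow \Fclo{\emptyset}\top$), $\mathtt{A}\text{-}\mathtt{SIA}$ ($(\Fclo{\emptyset}\phi \land \Fclo{\FCC}\psi) \rightarrow \Fclo{\FCC}(\phi\land\psi)$), and the rule $\mathtt{Mon}$. Note $\FMCL$ is the $\epsilon$-logic, so only the base axioms $\mathtt{A}\text{-}\mathtt{Tau}$, $\mathtt{A}\text{-}\mathtt{NAAA}$, $\mathtt{A}\text{-}\mathtt{MG}$, $\mathtt{A}\text{-}\mathtt{MC}$ plus $\mathtt{MP}$ and $\mathtt{CN}$ are available.

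The plan for $\mathtt{A}\text{-}\mathtt{CEA}$ is to instantiate $\mathtt{CN}$ on the propositional tautology $\top$: from $\vdash \top$ we get $\vdash \Fclo{\FCC}\top \rightarrow \Fclo{\emptyset}\top$ — wait, that only gives the consequent against $\Fclo{\FCC}\top$. Actually $\mathtt{CN}$ gives, for any $\psi$, $\vdash \Fclo{\FCC}\psi \rightarrow \Fclo{\emptyset}\top$ by taking $\phi := \top$. So taking $\psi := \phi$ directly yields $\Fclo{\FCC}\phi \rightarrow \Fclo{\emptyset}\top$. That is the whole derivation for $\mathtt{A}\text{-}\mathtt{CEA}$.

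For $\mathtt{Mon}$: suppose $\vdash \phi \rightarrow \psi$ with $\FCC \subseteq \FDD$. Apply $\mathtt{CN}$ to get $\vdash \Fclo{\FCC}\phi \rightarrow \Fclo{\emptyset}(\phi\rightarrow\psi)$; then by $\mathtt{A}\text{-}\mathtt{MG}$ instantiated at $\FCC$, $\vdash \Fclo{\emptyset}(\phi\rightarrow\psi) \rightarrow (\Fclo{\FCC}\phi \rightarrow \Fclo{\FCC}\psi)$; combining propositionally gives $\vdash \Fclo{\FCC}\phi \rightarrow \Fclo{\FCC}\psi$; then chain with $\mathtt{A}\text{-}\mathtt{MC}$, $\vdash \Fclo{\FCC}\psi \rightarrow \Fclo{\FDD}\psi$, to conclude $\vdash \Fclo{\FCC}\phi \rightarrow \Fclo{\FDD}\psi$. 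For $\mathtt{A}\text{-}\mathtt{SIA}$, the idea is: start from the propositional tautology $\phi \rightarrow (\psi \rightarrow (\phi\land\psi))$, apply $\mathtt{CN}$ to obtain $\vdash \Fclo{\FCC}\psi \rightarrow \Fclo{\emptyset}(\phi \rightarrow (\psi\rightarrow(\phi\land\psi)))$, then use $\mathtt{A}\text{-}\mathtt{MG}$ (twice, or via a nested-implication manipulation: $\Fclo{\emptyset}$ distributes conditionally over implications so we can peel $\phi$ and $\psi$ out one at a time using the monotonicity-of-goals axiom together with the $\mathtt{A}\text{-}\mathtt{CEA}$/$\mathtt{CN}$ machinery to discharge the $\Fclo{\emptyset}\phi$ and $\Fclo{\emptyset}\psi$ hypotheses) and propositional reasoning to reach $(\Fclo{\emptyset}\phi \land \Fclo{\FCC}\psi) \rightarrow \Fclo{\FCC}(\phi\land\psi)$. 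Concretely I would first derive an auxiliary ``conditional distribution'' fact: $\vdash \Fclo{\emptyset}(\alpha\rightarrow\beta) \rightarrow (\Fclo{\emptyset}\alpha \rightarrow \Fclo{\emptyset}\beta)$ and its $\FCC$-variant from $\mathtt{A}\text{-}\mathtt{MG}$, and then apply it to $\alpha := \phi$, $\beta := \psi \rightarrow (\phi\land\psi)$ after massaging with $\mathtt{A}\text{-}\mathtt{MG}$ once more.

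The only mild obstacle is $\mathtt{A}\text{-}\mathtt{SIA}$: one must be careful about which coalition index sits on each box and make sure the $\mathtt{A}\text{-}\mathtt{MG}$ applications line up (the antecedent box in $\mathtt{A}\text{-}\mathtt{MG}$ is always $\Fclo{\emptyset}$, while the two goal boxes share the arbitrary coalition $\FCC$), and that the necessitated premise from $\mathtt{CN}$ is available at the right coalition. Everything else is routine propositional bookkeeping on top of $\mathtt{CN}$, $\mathtt{A}\text{-}\mathtt{MG}$, and $\mathtt{A}\text{-}\mathtt{MC}$; since all three items hold already in $\FMCL$, no instance of $\mathtt{A}\text{-}\mathtt{Ser}$, $\mathtt{A}\text{-}\mathtt{IA}$, or $\mathtt{A}\text{-}\mathtt{Det}$ is needed, and the derivations transfer verbatim to every $\FXL$.
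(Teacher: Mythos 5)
Your derivations are correct; the paper explicitly skips the proof of this fact, so there is nothing to compare against, but what you give is the expected argument. A-CEA is immediate from $\mathtt{CN}$ applied to the tautology $\top$; $\mathtt{Mon}$ follows from $\mathtt{CN}$ on $\phi\rightarrow\psi$, then $\mathtt{A}\text{-}\mathtt{MG}$ and $\mathtt{A}\text{-}\mathtt{MC}$; and for $\mathtt{A}\text{-}\mathtt{SIA}$ your concrete route --- $\mathtt{CN}$ on the tautology $\phi\rightarrow(\psi\rightarrow(\phi\land\psi))$ with antecedent box $\Fclo{\emptyset}\phi$ (or $\Fclo{\FCC}\psi$), then $\mathtt{A}\text{-}\mathtt{MG}$ once at coalition $\emptyset$ to peel off $\phi$ and once at coalition $\FCC$ to peel off $\psi$ --- closes the argument. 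The only blemish is the aside about discharging ``the $\Fclo{\emptyset}\phi$ and $\Fclo{\emptyset}\psi$ hypotheses'': there is no $\Fclo{\emptyset}\psi$ hypothesis, the second conjunct is $\Fclo{\FCC}\psi$ and is consumed by the second $\mathtt{A}\text{-}\mathtt{MG}$ application; your final ``concretely'' paragraph already has this right, so nothing is actually missing.
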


\paragraph{Remarks}

The axiomatic system for $\FMCL$ given in Definition \ref{definition:Axiomatic systems for XL} is shown to be sound and complete with respect to the set of $\epsilon$-valid formulas in \cite{li2024minimal}.
By \cite{li2024minimal}, the axiomatic system for $\FCL$ given in Definition \ref{definition:Axiomatic systems for XL} is equivalent to the sound and complete axiomatic system for $\FCL$ given in \cite{pauly_modal_2002}.
 
The formula $\Fclo{\FCC} \top$ defines the class of frames where $\FCC$ at every state has an available joint action. However, the formula $(\Fclo{\FCC} p \land \Fclo{\FDD} q) \rightarrow \Fclo{\FCC \cup \FDD} (p \land q)$, where $\FCC$ and $\FDD$ are disjoint, does not define the class of frames where at every state, the union of any joint actions of $\FCC$ and $\FDD$ is a joint action of $\FCC \cup \FDD$; the formula $\Fclo{\FCC} (p \lor q) \rightarrow (\Fclo{\FCC} p \lor \Fclo{\FAG} q)$ does not define the class of frames where at every state, every available joint action of $\FAG$ has a unique outcome state.

In the literature, the formula $\neg \Fclo{\emptyset} \neg \phi \rightarrow \Fclo{\FAG} \phi$ is commonly treated as an axiom related to determinism. We want to point out that it is not valid in all deterministic frames, although it is valid in all serial and deterministic frames.

\section{Completeness of the eight coalition logics}
\label{section:Completeness of the eight coalition logics}

\subsection{Our approach: the reduction method}

To show the completeness of an axiomatic system, it suffices to show that every valid formula is derivable in it. We will achieve this by induction on the modal depth of formulas.

The basic step of the induction can be directly done by the classical propositional logic. We will go through the inductive step as follows.

First, we transfer every formula to a normal form, which is a conjunction of some \Fdefs{standard disjunctions}, that are disjunctions meeting some conditions.

Second, we show the \Fdefs{downward validity lemma}: \textit{for every standard disjunction $\phi$, if $\phi$ is valid, then the \Fdefs{validity-reduction condition} of $\phi$ is met.} Here, the validity-reduction condition of $\phi$ concerns the validity of some formulas with lower modal depth than $\phi$.

Third, we show the \Fdefs{upward derivability lemma}: \textit{for every standard disjunction $\phi$, if the \Fdefs{derivability-reduction condition} of $\phi$ is met, then $\phi$ is derivable.} Here, the derivability-reduction condition of $\phi$ is the result of replacing \emph{validity} in the validity-reduction condition of $\phi$ by \emph{derivability}.

We will show the downward validity lemma by showing its contrapositive, called the \Fdefs{upward satisfiability lemma}: \textit{for every \Fdefs{standard conjunction} $\phi'$, if the \Fdefs{satisfiability-reduction condition} of $\phi'$ is met, then $\phi'$ is satisfiable.}
Here, a standard conjunction $\phi'$ is equivalent to the negation of a standard disjunction $\phi$, and the satisfiability-reduction condition of $\phi'$ is equivalent to the negation of the validity-reduction condition of $\phi$.

To show the upward satisfiability lemma, the key is to find a pointed model satisfying a standard conjunction $\phi'$, given that its satisfiability-reduction condition is met. To achieve this, we will use an important notion, called \Fdefs{blueprints}.

\paragraph{Remarks}

Suppose we have proved that a sound logic is complete by this approach. It is easy to see that the other direction of the downward validity lemma and the upward derivability lemma also holds:
\begin{itemize}

\item \textit{For every standard disjunction $\phi$, if the validity-reduction condition of $\phi$ is met, then $\phi$ is valid.}

\item \textit{For every standard disjunction $\phi$, if $\phi$ is derivable, then the derivability-reduction condition of $\phi$ is met.}

\end{itemize}

\noindent This is why we use ``reduction'' in the names of the two lemmas.

For the reduction method, the downward validity lemma is crucial and other steps are relatively easy.

\newcommand{\Fmd}{\mathsf{md}}

\subsection{A normal form lemma}

For every $n \in \mathbb{N}$, the set $\{-x \mid x \in \mathbb{N} \text{ and } 1 \leq x \leq n \}$ is called a \Fdefs{negative indice set} and the set $\{x \mid x \in \mathbb{N} \text{ and } 0 \leq x \leq n \}$ is called a \Fdefs{positive indice set}. Note that negative index sets are allowed to be empty, but positive index sets are not.

We call a disjunction of propositional literals an \Fdefs{elementary disjunction}, and a conjunction of propositional literals an \Fdefs{elementary conjunction}.

\begin{definition}[Standard disjunctions and conjunctions]
~

A \Fdefs{standard disjunction} is a formula in the form $\gamma \vee (\FBW_{i\in \FNI}\Fclo{\FAA_i} \phi_{i} \to \FBV_{j\in \FPI}\Fclo{\FBB_j}\phi_{j})$, where $\gamma$ is an elementary disjunction, $\FNI$ is a negative indice set, and $\FPI$ is a positive indice set such that (1) if $\FNI\neq \emptyset$, then $\Fclo{\FAA_{-1}} \phi_{-1} = \Fclo{\emptyset} \top$, and (2) $\Fclo{\FBB_0}\phi_{0}=\Fclo{\FAG}\bot$.

A \Fdefs{standard conjunction} is a formula in the form $\gamma \wedge \FBW_{i\in \FNI}\Fclo{\FAA_i}\phi_{i}\wedge \FBW_{j\in \FPI}\neg \Fclo{\FBB_j}\phi_{j}$, where $\gamma$ is an elementary conjunction, $\FNI$ is a negative indice set, and $\FPI$ is a positive indice set such that (1) if $\FNI\neq \emptyset$, then $\Fclo{\FAA_{-1}} \phi_{-1} = \Fclo{\emptyset} \top$, and (2) $\neg \Fclo{\FBB_0}\phi_{0}= \neg \Fclo{\FAG}\bot$.

\end{definition}

It is easy to see that the negation of a standard disjunction is equivalent to a standard conjunction, and vice versa.

The special structure of standard disjunctions and conjunctions will make the statement of the downward validity lemma and the upward derivability lemma simpler.

\begin{example}[Standard disjunctions and conjunctions]
\label{example:Standard formulas}

Assume $\FAG = \{a,b\}$.
Let $\gamma = \bot$, $\FNI = \{-1, -2, -3\}$ and $\FPI = \{0,1\}$.
Then,
\[\bot \lor \big(
\big( \Fclo{\FAA_{-1}} \phi_{-1} \land \Fclo{\FAA_{-2}} \phi_{-2} \land \Fclo{\FAA_{-3}} \phi_{-3} \big) \rightarrow \big( \Fclo{\FBB_{0}} \psi_0 \lor \Fclo{\FBB_{1}} \psi_1 \big)
\big)\]
is a standard disjunction, and
\[\neg \bot \land 
\Fclo{\FAA_{-1}} \phi_{-1} \land \Fclo{\FAA_{-2}} \phi_{-2} \land \Fclo{\FAA_{-3}} \phi_{-3} \land \neg \Fclo{\FBB_{0}} \psi_0 \land \neg \Fclo{\FBB_{1}} \psi_1
\]
is a standard conjunction, with respect to $\gamma$, $\FNI$ and $\FPI$, where
$\FAA_{-1} = \emptyset$, $\phi_{-1} = \top$, $\FAA_{-2} = \{b\}$, $\phi_{-2} = q$, $\FAA_{-3} = \{a\}$, $\phi_{-3} = p$, $\FBB_{0} = \FAG$, $\psi_0 = \bot$, $\FBB_{1} = \FAG$, and $\psi_1 = p \land q$.

\end{example}

\begin{definition}[Modal depth of formulas]

Recursively define the \Fdefs{modal depth} $\Fmd (\phi)$ of formulas $\phi$ in $\Phi$ as follows:
\begin{itemize}

\item $\Fmd (\top) = (p) = 0$;

\item $\Fmd (\neg \psi) = \Fmd (\psi)$;

\item $\Fmd (\psi \land \chi) = \max (\Fmd (\psi), \Fmd (\chi))$;

\item $\Fmd (\Fclo{\FCC} \psi) = \Fmd (\psi) + 1$.

\end{itemize}

\end{definition}

\begin{lemma}[Normal form]
\label{lemma:normal-form}

Let $\FXX \in \FES$. For every $\phi \in \Phi$ such that $0 < \Fmd(\phi)$, there is $\phi'$ such that (1) $\vdash_\FXL \phi \leftrightarrow \phi'$, (2) $\phi$ and $\phi'$ have the same modal depth, and (3) $\phi'$ is in the form of $\delta_0 \land \dots \land \delta_k$, where every $\delta_i$ is a standard disjunction.

\end{lemma}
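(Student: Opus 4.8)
The plan is to prove the normal form lemma by induction on the structure of $\phi$, pushing everything through the propositional-logic machinery and the modal axioms $\mathtt{A}\text{-}\mathtt{MG}$, $\mathtt{A}\text{-}\mathtt{MC}$, $\mathtt{A}\text{-}\mathtt{NAAA}$, and $\mathtt{A}\text{-}\mathtt{CEA}$. The base cases $\top$ and $p$ are handled by observing that a propositional literal, viewed as a degenerate elementary disjunction $\gamma$, can be padded into a single standard disjunction: take $\FNI = \emptyset$ and $\FPI = \{0\}$ with $\Fclo{\FBB_0}\phi_0 = \Fclo{\FAG}\bot$, so the modal part is $\Fclo{\FAG}\bot$, which is refutable by $\mathtt{A}\text{-}\mathtt{NAAA}$; hence $\gamma \lor \Fclo{\FAG}\bot$ is provably equivalent to $\gamma$, and modal depth is preserved (or rather, we must be a little careful — $\Fclo{\FAG}\bot$ has modal depth $1$, so if $\phi$ is propositional we should instead allow $\FPI$-disjuncts to be absent and only reintroduce the $\Fclo{\FAG}\bot$ disjunct when the original formula already has modal depth $\geq 1$; I will state the induction so the normal form of a modal-depth-$0$ formula is literally a conjunction of elementary disjunctions, i.e. a CNF, and only invoke the standard-disjunction shape once a box appears).

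The inductive step has three cases. For $\phi = \neg\psi$: by the induction hypothesis $\psi$ is equivalent to a CNF of standard disjunctions $\chi_0 \land \dots \land \chi_k$; negating gives a disjunction of standard conjunctions, and then I re-distribute using propositional logic to a conjunction of clauses, each clause being a disjunction of literals and of formulas of the form $\neg\Fclo{\FAA_i}\phi_i$ and $\Fclo{\FBB_j}\phi_j$ — which is exactly the shape $\gamma \lor (\bigwedge \Fclo{\FAA_i}\phi_i \to \bigvee \Fclo{\FBB_j}\phi_j)$; the side conditions (1) and (2) are then restored by the padding tricks below. For $\phi = \psi_1 \land \psi_2$: concatenate the two CNFs; nothing to do. For $\phi = \Fclo{\FCC}\psi$: apply the induction hypothesis to $\psi$ to get $\psi \leftrightarrow \chi_0 \land \dots \land \chi_k$ with each $\chi_i$ a standard disjunction; by $\mathtt{Mon}$ (Fact~\ref{fact:derivative formulas and rules}) and the fact that $\Fclo{\FCC}$ distributes over conjunction in one direction, and using $\mathtt{A}\text{-}\mathtt{MG}$ together with $\mathtt{CN}$ in the manner standard for Coalition Logic to also get the converse, we obtain $\Fclo{\FCC}\psi \leftrightarrow \Fclo{\FCC}\chi_0 \land \dots \land \Fclo{\FCC}\chi_k$; now $\Fclo{\FCC}\chi_i$ is a single formula of modal depth one greater than $\chi_i$, hence equal to the modal depth of $\phi$, and viewing the whole $\Fclo{\FCC}\chi_i$ as the lone modal disjunct of a trivial standard disjunction (with $\gamma = \bot$, $\FNI = \emptyset$, and $\FPI = \{0,1\}$ where index $1$ carries $\Fclo{\FCC}\chi_i$ and index $0$ carries the refutable $\Fclo{\FAG}\bot$) gives the required shape.

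The two bookkeeping tricks that restore the side conditions, applied as a final normalization pass to every clause, are: (i) if $\FPI$ lacks the designated disjunct $\Fclo{\FAG}\bot$, adjoin it — this is harmless since $\neg\Fclo{\FAG}\bot$ is an axiom, and it does not raise modal depth because once any box occurs in the clause the depth is already $\geq 1$; (ii) if $\FNI \neq \emptyset$ but no conjunct among the $\Fclo{\FAA_i}\phi_i$ equals $\Fclo{\emptyset}\top$, adjoin $\Fclo{\emptyset}\top$ as a new antecedent conjunct — this changes the clause from $\gamma \lor (\bigwedge_{i} \Fclo{\FAA_i}\phi_i \to \delta)$ to $\gamma \lor (\Fclo{\emptyset}\top \land \bigwedge_{i}\Fclo{\FAA_i}\phi_i \to \delta)$, and the two are provably equivalent because $\mathtt{A}\text{-}\mathtt{CEA}$ gives $\Fclo{\FAA_i}\phi_i \to \Fclo{\emptyset}\top$ for the (nonempty) set of antecedents, so the added conjunct is implied by what is already there; again modal depth is unaffected since $\Fclo{\emptyset}\top$ has depth $1$ and the clause already contains a box.

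I expect the main obstacle to be the $\Fclo{\FCC}\psi$ case, specifically proving the distribution equivalence $\Fclo{\FCC}(\alpha \land \beta) \leftrightarrow (\Fclo{\FCC}\alpha \land \Fclo{\FCC}\beta)$ inside the weak system $\FMCL$: the left-to-right direction needs $\mathtt{Mon}$ twice, but the right-to-left direction is the delicate one and must be derived from $\mathtt{A}\text{-}\mathtt{MG}$ combined with $\mathtt{CN}$ (from the tautology $\alpha \to (\beta \to \alpha\land\beta)$, rule $\mathtt{CN}$ yields $\Fclo{\FCC}\beta \to \Fclo{\emptyset}(\alpha \to (\beta \to \alpha\land\beta))$, and then two applications of $\mathtt{A}\text{-}\mathtt{MG}$ peel off the implications) — this is the one spot where the proof genuinely uses the conditional-necessitation rule rather than pure propositional reasoning, and care is needed because $\FMCL$ lacks full necessitation. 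Everything else is routine propositional CNF manipulation plus the two padding observations, none of which touches modal depth; the depth-preservation clause (2) follows throughout because every rewrite we perform either is propositional (depth-invariant) or adjoins a subformula of depth $1$ to a clause that already has depth $\geq 1$.
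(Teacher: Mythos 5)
There is a genuine gap in the case $\phi = \Fclo{\FCC}\psi$. Your argument there rests on the equivalence $\Fclo{\FCC}(\alpha \land \beta) \leftrightarrow (\Fclo{\FCC}\alpha \land \Fclo{\FCC}\beta)$, and the right-to-left direction is simply not valid — not in $\FMCL$ and not even in full Coalition Logic. Semantically, $\Fclo{\FCC}\alpha$ and $\Fclo{\FCC}\beta$ may be witnessed by two \emph{different} available joint actions of the same coalition $\FCC$, neither of which ensures $\alpha \land \beta$ (take a state where $\FCC$ has one action all of whose outcomes satisfy $p$ but not $q$, and another all of whose outcomes satisfy $q$ but not $p$). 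By soundness the schema is therefore underivable, and your proposed derivation indeed breaks exactly where you flagged it as delicate: after the first application of $\mathtt{A}\text{-}\mathtt{MG}$ you hold $\Fclo{\FCC}(\beta \to \alpha \land \beta)$, but the second application of $\mathtt{A}\text{-}\mathtt{MG}$ needs $\Fclo{\emptyset}(\beta \to \alpha \land \beta)$ as its antecedent, which you cannot obtain ($\beta \to \alpha\land\beta$ is not a tautology, so $\mathtt{CN}$ does not give it, and $\mathtt{A}\text{-}\mathtt{SIA}$ only combines $\Fclo{\emptyset}$ with $\Fclo{\FCC}$, not two copies of $\Fclo{\FCC}$). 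This is the characteristic non-normality of coalition modalities, and no repair within your inductive scheme will recover the equivalence.

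The underlying misstep is recursing \emph{under} the modal operators at all. The normal form the paper needs is purely top-level: the formulas $\phi_i$, $\psi_j$ inside $\Fclo{\FAA_i}\phi_i$ and $\Fclo{\FBB_j}\psi_j$ in a standard disjunction are arbitrary (the later induction in the completeness proof is on modal depth, and the reduction conditions hand those inner formulas back to the inductive hypothesis). So the intended argument is: treat each maximal subformula of the form $\Fclo{\FCC}\chi$ as a propositional atom, put $\phi$ into propositional CNF over these atoms (this is propositional reasoning, available since $\FXL$ extends classical logic, and it leaves all modal subformulas — hence the modal depth — untouched), and then apply exactly your two padding tricks to each clause: adjoin the disjunct $\Fclo{\FAG}\bot$ using $\mathtt{A}\text{-}\mathtt{NAAA}$, and adjoin the antecedent conjunct $\Fclo{\emptyset}\top$ using $\mathtt{A}\text{-}\mathtt{CEA}$ when the negative index set is nonempty. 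Those two observations of yours are correct and are the whole content of the paper's justification; the structural induction through $\neg$, $\land$, and especially $\Fclo{\FCC}$ is unnecessary and, in the last case, unsound. (Your parenthetical worry about depth-$0$ formulas is a fair reading of the lemma as literally stated — every standard disjunction contains $\Fclo{\FAG}\bot$ and so has depth at least $1$ — but the paper only invokes the lemma for formulas of modal depth at least $1$, where the padding is depth-neutral as you note.)
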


\begin{proof}

Let $\phi \in \Phi$ such that $0 < \Fmd(\phi)$.

First, we can transform $\phi$ to $\psi$ in the conjunctive normal form $\psi_0 \land \dots \land \psi_k$ such that $\psi$ has the same modal depth as $\phi$ and $\vdash \phi \leftrightarrow \psi$.

Fix a $\psi_i$. Note $\psi_i$ is in the form $\alpha_0 \lor \dots \lor \alpha_l$, where every $\alpha_j$ is either a propositional literal, or in the form $\Fclo{\FCC} \beta$, or in the form $\neg \Fclo{\FCC} \beta$.

Assume there is no $\alpha_j$ in the form $\Fclo{\FCC} \beta$. Let $\chi_i = \psi_i \lor \bot \lor \neg \Fclo{\FAG} \bot$. Note $\neg \bot$ and $\neg \Fclo{\FAG} \bot$ are axioms of $\FXL$. Then $\vdash \psi_i \leftrightarrow \chi_i$.
Assume there is $\alpha_j$ in the form $\Fclo{\FCC} \beta$. Let $\chi_i = \psi_i \lor \bot \lor \neg \Fclo{\FAG} \bot \lor \Fclo{\emptyset} \top$. Note $\Fclo{\FCC} \beta \rightarrow \Fclo{\emptyset} \top$ is derivable. Then $\vdash \psi_i \leftrightarrow \chi_i$.
It is easy to see that in the two cases, either $\Fmd(\chi_i) = \Fmd(\psi_i)$, or $\Fmd(\chi_i) = 1$.

It is easy to transform $\chi_i$ to a standard disjunction $\delta_i$ such that $\vdash \chi_i \leftrightarrow \delta_i$ and $\Fmd (\delta_i) = \Fmd (\chi_i)$.

Let $\phi' = \delta_0 \land \dots \land \delta_k$. Then $\vdash \phi \leftrightarrow \phi'$. It is easy to see that $\phi$ and $\phi'$ have the same modal depth. 

\end{proof}

\subsection{Generated submodels}

In this subsection, we define generated submodels and state a result about them, which will be used in a result about blueprints in the next subsection.

\begin{definition}[Generated submodels]

Let $\MM = (\FST, \FAC, \Fout_\FAG, \Flab)$ be a model and $s \in \FST$. Define a model $\MM' = (\FST', \FAC, \Fout'_\FAG, \Flab')$ as follows, called the \Fdefs{generated submodel} of $\MM$ from $s$.
\begin{itemize}

\item

$\FST'$ is the smallest subset of $\FST$ meeting the following conditions:
\begin{itemize}

\item $s \in \FST'$;

\item for every $x \in \FST$, if $x \in \Fout_\FAG (s', \sigma_\FAG)$ for some $s' \in \FST'$ and $\sigma_\FAG \in \FJA_\FAG$, then $x \in \FST'$.

\end{itemize}

\item 

$\Fout'_\FAG$ is the restriction of $\Fout_\FAG$ to $\FST' \times \FJA_{\FAG}$;

\item

$\Flab'$ is the restriction of $\Flab$ to $\FST'$.

\end{itemize}

\end{definition}

The following result indicates that \emph{generated submodels preserve truth values}. 

\begin{fact}

Let $\MM = (\FST, \FAC, \Fout_\FAG, \Flab)$ be a model and $s \in \FST$. Let $\MM' = (\FST', \FAC, \Fout'_\FAG, \Flab')$ be the generated submodel of $\MM$ from $s$. Then for every $\phi \in \Phi$, for every $s' \in \FST'$, $\MM, s' \Vdash \phi$ if and only if $\MM', s' \Vdash \phi$.

\end{fact}

This fact is easy to show by induction, and we skip its proof.

\subsection{Blueprints and their realization}

In this subsection, we define an important notion, blueprints, and prove a result about it.
Intuitively, a blueprint can be viewed as a guide for constructing a model. Later in the proof for the downward validity lemma, we will use blueprints to construct the needed models.

\begin{definition}[Blueprints]

A \Fdefs{blueprint} is a tuple $\pg = (\FAC, \FlistAG)$, where:
\begin{itemize}
\item $\FAC$ is a nonempty set of actions;
\item $\FlistAG: \FJA_{\FAG}\to \powerset{\Phi}$ is a \Fdefs{listing funtion} for $\FAG$.
\end{itemize}

\end{definition}

For any $\sigma_\FAG \in \FJA_\FAG$, intuitively, $\FlistAG (\sigma_\FAG)$ specifies some formulas which $\sigma_\FAG$ can \emph{enable}.

We now generalize list functions to all coalitions and define performable joint actions, which have a similar meaning to available joint actions.

\begin{definition}[Listing functions for and performable joint actions of coalitions]

Let $\pg = (\FAC, \FlistAG)$ be a blueprint.

For all $\FCC \subseteq \FAG$, define the \Fdefs{listing function} $\Flist_\FCC$ for $\FCC$ as follows: for all $\sigma_\FCC \in \FJA_\FCC$,
\[
\Flist_\FCC (\sigma_\FCC) = \bigcup \{\FlistAG (\sigma_\FAG) \mid \sigma_\FAG \in \FJA_\FAG \text{ and } \sigma_\FCC \subseteq \sigma_\FAG\}.
\]

For all $\FCC \subseteq \FAG$, define the \Fdefs{set of performable joint actions} $\FPJA_{\FCC}$ of $\FCC$ as follows:
\[
\FPJA_{\FCC} = \{\sigma_\FCC \in \FJA_\FCC \mid \Flist_\FCC (\ja{\FCC}) \neq \emptyset\}.
\]

We use $\FPJA$ to denote $\bigcup \{\FPJA_\FCC \mid \FCC \subseteq \FAG\}$.

\end{definition}

For every $\FXX \in \FES$, a blueprint should meet some conditions to be used to construct an $\FXX$-model, which is what the following definition is about.

\begin{definition}[Regular blueprints]

Let $\FXX \in \FES$ and $\pg = (\FAC, \FlistAG)$ be a blueprint.

We say $\pg$ is \Fdefs{$\FXX$-regular} if the following conditions are met:
\begin{enumerate}[label=(\arabic*),leftmargin=3.33em]

\item 

for all $\FCC \subseteq \FAG$, $\ja{\FCC} \in \FJA_{\FCC}$, and $\chi \in \Flist_\FCC (\ja{\FCC})$, $\chi$ is $\FXX$-satisfiable;

\item 

if $\FSX \in \FXX$, then $\pg$ is \Fdefs{serial}, that is, $\FPJA_{\FCC} \neq \emptyset$ for all $\FCC \subseteq \FAG$;

\item

if $\FIX \in \FXX$, then $\pg$ is \Fdefs{independent}, that is, $\ja{\FCC}\cup\ja{\FDD}\in \FPJA_{\FCC \cup \FDD}$ for all $\FCC, \FDD \subseteq \FAG$ such that $\FCC \cap \FDD = \emptyset$, $\ja{\FCC}\in \FPJA_{\FCC} $, and $\ja{\FDD}\in \FPJA_{\FDD} $;

\item

if $\FDX \in \FXX$, then $\pg$ is \Fdefs{deterministic}, that is, $\FlistAG (\ja{\FAG})$ is a singleton for all $\ja{\FAG} \in \FPJA_{\FAG}$.

\end{enumerate}

\end{definition}

\begin{definition}[Ensuring functions and enabling functions]
Let $\FXX \in \FES$ and $\pg = (\FAC, \FlistAG)$ be an $\FXX$-regular blueprint.

Define the \Fdefs{ensuring function} $\mathcal{A}_\FXX$ as follows: for all $\FCC \subseteq \FAG$ and $\ja{\FCC} \in \FJA_{\FCC}$,
\[
\all{\FXX}{\ja{\FCC}} = \{\psi \in \Phi \mid \text{ for all } \phi \in \Flist_\FCC (\ja{\FCC}), \phi \vDash_\FXL \psi\}.
\]

Define the \Fdefs{enabling function} $\mathcal{E}_\FXX$ as follows: for all $\FCC \subseteq \FAG$ and $\ja{\FCC} \in \FPJA_{\FCC}$,
\[
\ex{\FXX}{\ja{\FCC}} = \{\psi \in \Phi \mid \text{ for some } \phi \in \Flist_\FCC (\ja{\FCC}), \phi \vDash_\FXL \psi\}.
\]

\end{definition}

Intuitively, $\ja{\FCC}$ \emph{ensures} the formulas in $\all{\FXX}{\ja{\FCC}}$, and \emph{enables} the formulas in $\ex{\FXX}{\ja{\FCC}}$.

\begin{definition}[Realization of blueprints]
\label{definition:??}

Let $\FXX \in \FES$, $\pg = (\FAC_0, \FlistAG)$ be an $\FXX$-regular blueprint, and $\gamma$ be an $\FXX$-satisfiable elementary conjunction.

We say a pointed $\FXX$-model $(\MM,s_0)$, where $\MM = (\FST, \FAC, \Fout_\FAG, \Flab)$, \Fdefs{realizes} $\pg$ and $\gamma$ if the following conditions are met:
\begin{enumerate}[label=(\arabic*),leftmargin=3.33em]

\item 

$\FAC_0 \subseteq \FAC$;

\item 

$\Faja_\FCC (s_0) =\FPJA_{\FCC}$ for every $\FCC \subseteq \FAG$;

\item 

$\MM, s_0 \Vdash \gamma$;

\item 

$\ja{\FCC} \achieve_{(\MM,s_0)} \psi$ for every $\FCC \subseteq \FAG$, $\ja{\FCC} \in \FJA_{\FCC}$ and $\psi \in \all{\FXX}{\ja{\FCC}}$;

\item 

$\ja{\FCC} \allow_{(\MM,s_0)} \psi$ for every $\FCC \subseteq \FAG$, $\ja{\FCC} \in \FJA_{\FCC}$ and $\psi\in\ex{\FXX}{\ja{\FCC}}$.

\end{enumerate}

\end{definition}

\begin{theorem}[Realizability of blueprints]
\label{theorem:Realizability of blueprints}
Let $\FXX \in \FES$. For all $\FXX$-regular blueprint $\pg$ and $\FXX$-satisfiable elementary conjunction $\gamma$, there is a pointed $\FXX$-model $(\MM,s_0)$ realizing them.
\end{theorem}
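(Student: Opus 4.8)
The plan is to build the model $\MM$ explicitly from the blueprint $\pg = (\FAG, \FAC_0, \FlistAG)$ and the elementary conjunction $\gamma$, using the satisfiability witnesses promised by $\FXX$-regularity. The root $s_0$ is a fresh state whose label is dictated by $\gamma$. For every $\FCC \subseteq \FAG$ and every performable $\ja{\FCC} \in \FPJA_\FCC$, and for each formula $\chi \in \Flist_\FCC(\ja{\FCC})$, condition (1) of regularity gives an $\FXX$-satisfiable $\chi$, hence (inductively on modal depth, or by an already-available satisfiability fact) a pointed $\FXX$-model $(\MM_\chi, t_\chi)$ with $\MM_\chi, t_\chi \Vdash \chi$. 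I would take $\MM$ to be the disjoint union of $s_0$ together with all these $(\MM_\chi, t_\chi)$, define the action set $\FAC = \FAC_0 \cup (\text{whatever actions the submodels use, kept disjoint})$, and define $\Fout_\FAG(s_0, \ja{\FAG})$ to be the set of all $t_\chi$ with $\chi \in \FlistAG(\ja{\FAG})$ — so that an action profile's outcomes are exactly the roots of the submodels for the formulas it lists. For states inside a submodel, $\Fout_\FAG$ is inherited from that submodel (after renaming actions so that every profile is defined; since submodels are $\FXX$-models this is routine). This immediately makes $\Fout_\FCC(s_0, \ja{\FCC}) = \bigcup\{t_\chi \mid \chi \in \Flist_\FCC(\ja{\FCC})\}$ by the definition of $\Flist_\FCC$ via supersets, matching the definition of $\Fout_\FCC$ via supersets.

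With this construction the five conditions of Definition~\ref{definition:??} (realization) should fall out. Condition (1), $\FAC_0 \subseteq \FAC$, is by fiat. Condition (2), $\Faja_\FCC(s_0) = \FPJA_\FCC$, holds because $\Fout_\FCC(s_0, \ja{\FCC}) \neq \emptyset$ iff $\Flist_\FCC(\ja{\FCC}) \neq \emptyset$ iff $\ja{\FCC} \in \FPJA_\FCC$. Condition (3), $\MM, s_0 \Vdash \gamma$, is by choice of $\Flab(s_0)$. For condition (4): if $\psi \in \all{\FXX}{\ja{\FCC}}$ then $\phi \models_\FXL \psi$ for every $\phi \in \Flist_\FCC(\ja{\FCC})$; every $t \in \Fout_\FCC(s_0, \ja{\FCC})$ is some $t_\chi$ with $\chi \in \Flist_\FCC(\ja{\FCC})$, and $\MM_\chi, t_\chi \Vdash \chi$, so by logical consequence $\MM_\chi, t_\chi \Vdash \psi$; a truth-preservation (generated-submodel / bisimulation) lemma then transfers this to $\MM, t_\chi \Vdash \psi$, giving $\ja{\FCC} \achieve_{(\MM,s_0)} \psi$. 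Condition (5) is the dual: some $\chi \in \Flist_\FCC(\ja{\FCC})$ has $\chi \models_\FXL \psi$, and $t_\chi$ witnesses $\ja{\FCC} \allow_{(\MM,s_0)} \psi$. Finally one checks that $\MM$ is genuinely an $\FXX$-model: seriality of $\pg$ gives $\FPJA_\FCC \neq \emptyset$ hence $\Faja_\FCC(s_0) \neq \emptyset$, and seriality inside submodels is given; independence and determinism transfer from the corresponding clauses in the definition of $\FXX$-regular blueprint at $s_0$ (using $\Faja_\FCC(s_0) = \FPJA_\FCC$ and $\Fout_\FAG(s_0,\ja{\FAG})$ being the set of $t_\chi$ for $\chi \in \FlistAG(\ja{\FAG})$, a singleton when $\FlistAG(\ja{\FAG})$ is), and away from $s_0$ they are inherited.

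The main obstacle, and the point needing real care, is the action bookkeeping needed to make every state — not just $s_0$ — have the $\FXX$-properties while keeping the outcome structure at $s_0$ under control. In an $\FXX$-model the outcome function must be total on $\FJA_\FAG$, yet at $s_0$ we only want the profiles in $\FJA_\FAG$ (all of them, since $\FlistAG$ is defined on all of $\FJA_\FAG$, but with possibly-empty outcome when $\FlistAG(\ja{\FAG}) = \emptyset$), and the submodels carry their own action alphabets which must be reconciled with $\FAC_0$ without accidentally creating unwanted available actions or destroying independence/determinism. The clean way is to take $\FAC$ large enough (e.g. $\FAC_0$ together with a disjoint copy of each submodel's actions, plus padding so totality holds) and to define $\Fout_\FAG$ so that any profile using a ``foreign'' action at a state where it doesn't belong gets outcome $\emptyset$; one must then re-verify that this does not break seriality (it doesn't, because the relevant states already had a witnessing available profile) or independence/determinism (these are conditions about \emph{available} joint actions and their outcomes, so empty-outcome profiles are harmless). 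A secondary subtlety is that $\all{\FXX}{\ja{\FCC}}$ is defined for \emph{all} $\ja{\FCC} \in \FJA_\FCC$, including non-performable ones, where $\Flist_\FCC(\ja{\FCC}) = \emptyset$ so $\all{\FXX}{\ja{\FCC}} = \Phi$; but then $\Fout_\FCC(s_0,\ja{\FCC}) = \emptyset$ as well, so $\ja{\FCC} \achieve_{(\MM,s_0)} \psi$ holds vacuously for every $\psi$, and condition (4) is satisfied trivially — worth stating explicitly so the reader is not alarmed. The transfer lemma (truth values of all formulas are preserved when passing from a submodel to its isomorphic copy sitting inside $\MM$, since the submodel is a generated subframe of $\MM$) is standard but should be stated and invoked rather than re-proved.
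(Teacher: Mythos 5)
Your proposal is correct and follows essentially the same construction as the paper: a fresh root $s_0$ labelled by $\gamma$, disjoint satisfying submodels attached as the outcomes of the performable grand-coalition profiles according to $\FlistAG$, empty outcomes for ``foreign'' profiles, and verification of the five realization conditions via a generated-submodel truth-preservation argument. The only cosmetic difference is that the paper indexes the attached submodels by pairs $(\lambda,\phi)$ with $\lambda\in\FPJA_\FAG$ and $\phi\in\FlistAG(\lambda)$ rather than by formulas alone, which changes nothing in the verification.
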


\begin{proof}

Let $\pg = (\FAC_0, \FlistAG)$ be an $\FXX$-regular blueprint, and $\gamma$ be an $\FXX$-satisfiable elementary conjunction.

\paragraph{Construction of a pointed model}
~

Let $\{(\MM^{\lambda}_{\phi}, s^{\lambda}_{\phi}) \mid \lambda \in \FPJA_{\FAG} \text{ and } \phi \in \listing{\lambda} \}$, where $\MM^{\lambda}_{\phi} = (\FST^{\lambda}_{\phi}, \FAC^{\lambda}_{\phi}, {\Fout_\FAG}^{\lambda}_{\phi}, \Flab^{\lambda}_{\phi})$, be a set of pointed $\FXX$-models meeting the following conditions:
\begin{itemize}

\item 

for every $\lambda \in \FPJA_{\FAG}$ and $\phi \in \listing{\lambda}$, $\MM^{\lambda}_{\phi}, s^{\lambda}_{\phi} \Vdash \phi$;

\item 

all $\FST^{\lambda}_{\phi}$ are pairwise disjoint;

\item 

all $\FAC^{\lambda}_{\phi}$ and $\FAC_0$ are pairwise disjoint.

\end{itemize}

\noindent Note $\{(\MM^{\lambda}_{\phi}, s^{\lambda}_{\phi}) \mid \lambda \in \FPJA_{\FAG} \text{ and } \phi \in \listing{\lambda} \}$ is empty if $\FPJA_\FAG$ is empty.

Let $s_0$ be a state not in any of these models.
Define a pointed model $(\MM,s_0)$, where $\MM = (\FST, \FAC, \Fout_\FAG, \Flab)$, as follows:
\begin{itemize}

\item 

$\FST =\{s_0\} \cup \bigcup \{\FST^{\lambda}_{\phi} \mid \lambda \in \FPJA_{\FAG} \text{ and } \phi \in \listing{\lambda}\}$;

\item

$\FAC =\FAC_{0} \cup \bigcup \{\FAC^{\lambda}_{\phi} \mid \lambda \in \FPJA_{\FAG} \text{ and } \phi \in \listing{\lambda}\}$;

\item 

$\Fout_\FAG: \FST \times \FJA_{\FAG} \to \powerset{\FST}$ such that for all $s \in \FST$ and $\ja{\FAG} \in \FJA_{\FAG}$:
\[
{\Fout_\FAG (s, \ja{\FAG})} =
\begin{cases}
{{\Fout_\FAG}^{\lambda}_{\phi} (s, \ja{\FAG}}) & \text{if } s \in {\FST^{\lambda}_{\phi}} \text{ and } \ja{\FAG}\in {\FJA_\FAG}^\lambda_\phi \\
& \text{for some } \lambda \in \FPJA_{\FAG} \text{ and } \phi \in \listing{\lambda} \\
\{s^{\ja{\FAG}}_{\phi} \mid \phi \in \listing{\ja{\FAG}}\} & \text{if } s = s_0 \text{ and } \ja{\FAG} \in \FPJA_{\FAG} \\
\emptyset & \text{otherwise}
\end{cases}
\]
where ${\FJA_\FAG}^\lambda_\phi = \{\delta: \FAG\to \FAC^{\lambda}_{\phi}\}$;

\item 

$\Flab: \FST \to \powerset{\FAP}$ such that for all $s \in \FST$:
\[
{\Flab (s)}=
\begin{cases}
{\Flab^{\lambda}_{\phi} (s)} & \text{if } s \in {\FST^{\lambda}_{\phi}} \text{ for some } \lambda \in \FPJA_{\FAG}\text{ and } \phi \in \listing{\lambda} \\
\{p \mid \text{$p$ is a conjunct of $\gamma$}\} & \text{if } s = s_0 \\
\end{cases}
\]

\end{itemize}

\paragraph{The constructed pointed model works}
~

We claim $(\MM,s_0)$ is a pointed $\FXX$-model realizing $\pg$ and $\gamma$.

It is easy to verify the first two conditions given in the definition of realization of blueprints:
\begin{enumerate}[label=(\arabic*),leftmargin=3.33em]

\item 

$\FAC_0 \subseteq \FAC$;

\item 

$\Faja_\FCC (s_0) =\FPJA_{\FCC}$ for every $\FCC \subseteq \FAG$.

\end{enumerate}

Then, it is easy to check that $(\MM, s_0)$ is a pointed $\FXX$-model.

Let $\lambda \in \FPJA_{\FAG}$ and $\phi \in \listing{\lambda}$. It can be easily shown the \emph{generated sub-model} of $\MM^{\lambda}_{\phi}$ at $s^{\lambda}_{\phi}$  is also the \emph{generated sub-model} of $\MM$ at $s^{\lambda}_{\phi}$. Then for every $\psi\in \Phi$: $\MM^{\lambda}_{\phi}, s^{\lambda}_{\phi} \Vdash \psi$ if and only if $\MM, s^{\lambda}_{\phi} \Vdash \psi$.

We now show that the last three conditions specified in the definition of realization of blueprints hold.
\begin{enumerate}[leftmargin=3.33em]

\item[(3)]

It is easy to see $\MM,s_0 \Vdash \gamma$.

\item[(4)]

Let $\FCC \subseteq \FAG$, $\ja{\FCC}\in \FJA_{\FCC}$ and $\psi \in \all{X}{\ja{\FCC}}$. We want to show $\ja{\FCC} \achieve_{(\MM,s_0)} \psi$.

Then, for all $\phi \in \Flist_\FCC (\ja{\FCC})$, $\phi \vDash_\FXL \psi$.
Then for all $\ja{\FAG} \in \FJA_{\FAG}$ and $ \phi \in \listing{\ja{\FAG}}$, if $\ja{\FCC}\subseteq \ja{\FAG}$, then $\phi \vDash_{\FXL} \psi$.

Let $s \in \Fout_\FCC (s_0, \sigma_\FCC)$. Then $s \in \Fout_\FAG (s_0, \sigma_\FAG)$ for some $\sigma_\FAG \in \FJA_\FAG$ such that $\sigma_\FCC \subseteq \sigma_\FAG$.
Then, $s = s^{\sigma_\FAG}_{\phi}$ for some $\phi \in \FlistAG (\sigma_\FAG)$. Note $\MM^{\sigma_\FAG}_{\phi},s^{\sigma_\FAG}_{\phi} \Vdash \phi$.
Then $\MM, s^{\sigma_\FAG}_{\phi} \Vdash \phi$.
Then $\MM,s^{\sigma_\FAG}_{\phi} \Vdash \psi$, that is, $\MM,s \Vdash \psi$. Therefore, $\ja{\FCC} \achieve_{(\MM,s_0)} \psi$.

\item[(5)]

Let $\FCC \subseteq \FAG$, $\ja{\FCC}\in \FJA_{\FCC}$ and $\psi \in \ex{\FXX}{\ja{\FCC}}$. We want to show $\ja{\FCC} \allow_{(\MM,s_0)} \psi$.

Then, for some $\phi \in \Flist_\FCC (\ja{\FCC})$, $\phi \vDash_\FXL \psi$.
This implies that for some $\ja{\FAG} \in \FJA_{\FAG}$, $\ja{\FCC}\subseteq \ja{\FAG}$ and $\phi \in \listing{\ja{\FAG}}$.
Thus, $\ja{\FAG} \in \FPJA_{\FAG}$. Then the pointed $\FXX$-model $(\MM^{\sigma_\FAG}_{\phi},s^{\sigma_\FAG}_{\phi})$ is defined and $\MM^{\sigma_\FAG}_{\phi},s^{\sigma_\FAG}_{\phi} \Vdash \phi$.
Then $\MM, s^{\sigma_\FAG}_{\phi} \Vdash \phi $. Then, $\MM, s^{\sigma_\FAG}_{\phi} \Vdash \psi$.
Note $s^{\sigma_\FAG}_{\phi} \in \Fout_\FAG (s_0,\sigma_\FAG)$. Then $s^{\sigma_\FAG}_{\phi} \in \Fout_\FCC (s_0,\sigma_\FCC)$.
Therefore, $\ja{\FCC} \allow_{(\MM,s_0)} \psi$.

\end{enumerate}

\end{proof}

\subsection{Downward validity lemma}

The following notion will be used in stating the downward validity lemma, and also the upward derivability lemma in the next subsection.

\begin{definition}[Neat sets of negative indices]
\label{definition:Neatness}

Let $\FXX \in \FES$ and $\gamma \vee (\FBW_{i\in \FNI}\Fclo{\FAA_i}\phi_i \to \FBV_{j\in \FPI}\Fclo{\FBB_j}\psi_j)$ be a standard disjunction.

For every $\FNI' \subseteq \FNI$, we say $\FNI'$ is \Fdefs{$\FXX$-neat} if the following conditions are met:
\begin{enumerate}[label=(\arabic*),leftmargin=3.33em]

\item 

for all $i,i'\in \FNI'$, if $i\neq i'$, then $\FAA_{i}\cap \FAA_{i'}= \emptyset$;

\item 

if $\FSX \notin \FXX$, then $\FNI' \neq \emptyset $;

\item 

if $\FIX \notin \FXX$, then for all $i,i'\in \FNI'$, if $\FAA_{i}\neq \emptyset $ and $\FAA_{i'}\neq \emptyset $, then $i=i'$.

\end{enumerate}

\end{definition}

\begin{example}[Neat sets of negative indices]
\label{example: neatness}

Assume $\FXX = \FDX$ and $\FAG = \{a,b\}$. Consider the standard disjunction
\[
\bot \lor \big(
\big( \Fclo{\FAA_{-1}} \phi_{-1} \land \Fclo{\FAA_{-2}} \phi_{-2} \land \Fclo{\FAA_{-3}} \phi_{-3} \big) \rightarrow \big( \Fclo{\FBB_{0}} \psi_0 \lor \Fclo{\FBB_{1}} \psi_1 \big)
\big)
\]
with respect to $\gamma = \bot$, $\FNI = \{-1, -2, -3\}$ and $\FPI = \{0,1\}$, given in Example \ref{example:Standard formulas}, where $\FAA_{-1} = \emptyset$, $\phi_{-1} = \top$, $\FAA_{-2} = \{b\}$, $\phi_{-2} = q$, $\FAA_{-3} = \{a\}$, $\phi_{-3} = p$, $\FBB_{0} = \FAG$, $\psi_0 = \bot$, $\FBB_{1} = \FAG$, and $\psi_1 = p \land q$.

It can be verified the following subsets of $\FNI$ are $\FDX$-neat: $\FNI_1 = \{-1\}, \FNI_2 = \{-2\}, \FNI_3 = \{-3\}, \FNI_4 = \{-1, -2\}$, and $\FNI_5 = \{-1, -3\}$.

\end{example}

Let $\gamma \vee (\FBW_{i\in \FNI}\Fclo{\FAA_i}\phi_i \to \FBV_{j\in \FPI}\Fclo{\FBB_j}\psi_j)$ be a standard disjunction. Define $\FPI_0 = \{j \in \FPI \mid \FBB_j = \FAG\}$, called the \Fdefs{set of basic positive indices}.

\medskip

\begin{lemma}[Downward validity]
\label{lemma:Downward validity}

Let $\FXX \in \FES$ and $\gamma \lor (\FBW_{i \in \FNI} \Fclo{\FAA_i} \phi_i \to \FBV_{j \in \FPI} \Fclo{\FBB_j} \psi_j)$ be a standard disjunction.
Then, if $\models_{\FXL} \gamma \lor (\FBW_{i \in \FNI} \Fclo{\FAA_i} \phi_i \to \FBV_{j \in \FPI} \Fclo{\FBB_j} \psi_j)$, then the following condition, called the \Fdefs{$\FXX$-validity-reduction condition}, is met:
\begin{enumerate}[label=(\arabic*),leftmargin=3.33em]

\item 

$\models_{\FXL} \gamma$, or 

\item 

the following two conditions hold:
\begin{enumerate}

\item 

if $\mathsf{D} \notin \FXX$, then there is $\FNI' \subseteq \FNI$ and $j \in \FPI$ such that $\FNI'$ is $\FXX$-neat, $\FBC_{i\in \FNI'} \FAA_i\subseteq \FBB_j$, and $\models_\FXL \FBW_{i\in \FNI'}\phi_i \to \psi_j$;

\item 

if $\mathsf{D} \in \FXX$, then there is $\FNI'\subseteq \FNI$ and $j\in \FPI$ such that $\FNI'$ is $\FXX$-neat, $\FBC_{i\in \FNI'} \FAA_i \subseteq \FBB_{j}$, and $\models_\FXL \FBW_{i \in \FNI'} \phi_i \to (\psi_{j} \lor \FBV_{k \in \FPI_0} \psi_k)$.

\end{enumerate}

\end{enumerate}

\end{lemma}

The following example illustrates this lemma.

\begin{example}
\label{example:11}

Assume $\FXX = \FDX$ and $\FAG = \{a,b\}$. Consider the standard disjunction
\[
\bot \lor \big(
\big( \Fclo{\FAA_{-1}} \phi_{-1} \land \Fclo{\FAA_{-2}} \phi_{-2} \land \Fclo{\FAA_{-3}} \phi_{-3} \big) \rightarrow \big( \Fclo{\FBB_{0}} \psi_0 \lor \Fclo{\FBB_{1}} \psi_1 \big)
\big)
\]
with respect to $\gamma = \bot$, $\FNI = \{-1, -2, -3\}$ and $\FPI = \{0,1\}$, given in Example \ref{example:Standard formulas}, where $\FAA_{-1} = \emptyset$, $\phi_{-1} = \top$, $\FAA_{-2} = \{b\}$, $\phi_{-2} = q$, $\FAA_{-3} = \{a\}$, $\phi_{-3} = p$, $\FBB_{0} = \FAG$, $\psi_0 = \bot$, $\FBB_{1} = \FAG$, and $\psi_1 = p \land q$.

As mentioned in Example \ref{example: neatness}, the following subsets of $\FNI$ are $\FDX$-neat: $\FNI_1 = \{-1\}, \FNI_2 = \{-2\}, \FNI_3 = \{-3\}, \FNI_4 = \{-1, -2\}$, and $\FNI_5 = \{-1, -3\}$.
Note $\FPI_0 = \{0,1\}$.

By this lemma, if $\models_\FDL \phi$, then one of the following holds:
\begin{enumerate}[label=(\arabic*),leftmargin=3.33em]

\item 

$\models_\FDL \gamma$, that is, $\models_\FDL \bot$;

\item 

$\FBC_{i \in \FNI_1} \FAA_i \subseteq \FBB_0$ and $\models \FBW_{i \in \FNI_1} \phi_i \rightarrow (\psi_0 \land \FBW_{j \in \FPI_0} \psi_j)$, that is, $\emptyset \subseteq \FAG$ and $\models_\FDL \top \rightarrow (\bot \land \bot \land (p \land q))$;

\item 

$\FBC_{i \in \FNI_2} \FAA_i \subseteq \FBB_0$ and $\models \FBW_{i \in \FNI_2} \phi_i \rightarrow (\psi_0 \land \FBW_{j \in \FPI_0} \psi_j)$, that is, $\{b\} \subseteq \FAG$ and $\models_\FDL q \rightarrow (\bot \land \bot \land (p \land q))$;

\item 

$\FBC_{i \in \FNI_3} \FAA_i \subseteq \FBB_0$ and $\models \FBW_{i \in \FNI_3} \phi_i \rightarrow (\psi_0 \land \FBW_{j \in \FPI_0} \psi_j)$, that is, $\{a\} \subseteq \FAG$ and $\models_\FDL p \rightarrow (\bot \land \bot \land (p \land q))$;

\item 

$\FBC_{i \in \FNI_4} \FAA_i \subseteq \FBB_0$ and $\models \FBW_{i \in \FNI_4} \phi_i \rightarrow (\psi_0 \land \FBW_{j \in \FPI_0} \psi_j)$, that is, $\{b\} \subseteq \FAG$ and $\models_\FDL (\top \land q) \rightarrow (\bot \land \bot \land (p \land q))$;

\item 

$\FBC_{i \in \FNI_5} \FAA_i \subseteq \FBB_0$ and $\models \FBW_{i \in \FNI_5} \phi_i \rightarrow (\psi_0 \land \FBW_{j \in \FPI_0} \psi_j)$, that is, $\{a\} \subseteq \FAG$ and $\models_\FDL (\top \land p) \rightarrow (\bot \land \bot \land (p \land q))$;

\item 

$\FBC_{i \in \FNI_1} \FAA_i \subseteq \FBB_1$ and $\models \FBW_{i \in \FNI_1} \phi_i \rightarrow (\psi_1 \land \FBW_{j \in \FPI_0} \psi_j)$, that is, $\emptyset \subseteq \FAG$ and $\models_\FDL \top \rightarrow ((p \land q) \land \bot \land (p \land q))$;

\item 

$\FBC_{i \in \FNI_2} \FAA_i \subseteq \FBB_1$ and $\models \FBW_{i \in \FNI_2} \phi_i \rightarrow (\psi_1 \land \FBW_{j \in \FPI_0} \psi_j)$, that is, $\{b\} \subseteq \FAG$ and $\models_\FDL q \rightarrow ((p \land q) \land \bot \land (p \land q))$;

\item 

$\FBC_{i \in \FNI_3} \FAA_i \subseteq \FBB_1$ and $\models \FBW_{i \in \FNI_3} \phi_i \rightarrow (\psi_1 \land \FBW_{j \in \FPI_0} \psi_j)$, that is, $\{a\} \subseteq \FAG$ and $\models_\FDL p \rightarrow ((p \land q) \land \bot \land (p \land q))$;

\item 

$\FBC_{i \in \FNI_4} \FAA_i \subseteq \FBB_1$ and $\models \FBW_{i \in \FNI_4} \phi_i \rightarrow (\psi_1 \land \FBW_{j \in \FPI_0} \psi_j)$, that is, $\{b\} \subseteq \FAG$ and $\models_\FDL (\top \land q) \rightarrow ((p \land q) \land \bot \land (p \land q))$;

\item 

$\FBC_{i \in \FNI_5} \FAA_i \subseteq \FBB_1$ and $\models \FBW_{i \in \FNI_5} \phi_i \rightarrow (\psi_1 \land \FBW_{j \in \FPI_0} \psi_j)$, that is, $\{a\} \subseteq \FAG$ and $\models_\FDL (\top \land p) \rightarrow ((p \land q) \land \bot \land (p \land q))$.

\end{enumerate}

\end{example}

We now show the downward validity lemma.

\begin{proof}
~

\paragraph{Assumptions}
~

Assume the $\FXX$-validity-reduction condition is not met. Then, it can be verified that the following condition, called the \Fdefs{$\FXX$-satisfiability-reduction condition}, is met:
\begin{enumerate}[label=(\arabic*),leftmargin=3.33em]

\item

$\neg \gamma$ is $\FXX$-satisfiable, and 

\item 

the following two conditions hold:
\begin{enumerate}

\item

if $\FDX \notin \FXX$, then for all $\FNI'\subseteq \FNI$ and $j \in \FPI$, if $\FNI'$ is $\FXX$-neat and $\FBC_{i\in \FNI'}\FAA_i\subseteq \FBB_j$, then $\FBW_{i\in \FNI'}\phi_i \wedge \neg \psi_j$ is $\FXX$-satisfiable;

\item 

if $\FDX \in \FXX$, then for all $\FNI'\subseteq \FNI$ and $j \in \FPI$, if $\FNI'$ is $\FXX$-neat and $\FBC_{i\in \FNI'}\FAA_i\subseteq \FBB_{j}$, then $\FBW_{i \in \FNI'} \phi_i \wedge \neg \psi_{j} \wedge \FBW_{k \in \FPI_0} \neg \psi_k$ is $\FXX$-satisfiable.

\end{enumerate}

\end{enumerate}

Let $\gamma'$ be an $\FXX$-satisfiable elementary conjunction, equivalent to $\neg \gamma$.
It suffices to show the standard conjunction $\gamma' \wedge\FBW_{i\in \FNI}\Fclo{\FAA_i}\phi_i \wedge \FBW_{j\in \FPI}\neg \Fclo{\FBB_j}\psi_j$ is $\FXX$-satisfiable.

\medskip

In the sequel, first, we define a blueprint, from which we can get a pointed $\FXX$-model; second, we show the pointed $\FXX$-model satisfies $\gamma' \wedge\FBW_{i\in \FNI}\Fclo{\FAA_i}\phi_i \wedge \FBW_{j\in \FPI}\neg \Fclo{\FBB_j}\psi_j$.

\medskip

To define the blueprint, we need two functions: $\Fsupport$ and $\Fimpeach$.

\paragraph{Two functions $\Fsupport$ and $\Fimpeach$}~

Define two functions $\Fsupport: \FJA \to \powerset{\FNI}$ and $\Fimpeach: \FJA \to \FPI$ as follows:
\begin{itemize}

\item 

for all $\FCC \subseteq \FAG$ and $\ja{\FCC}\in \FJA_{\FCC}$:
\[
\support{\ja{\FCC}} = \{i \in \FNI \mid \FAA_i\subseteq \FCC \text{ and } \act{\FCC}{a} = i\text{ for all } a \in \FAA_i\}.
\]

\item

for all $\FCC \subseteq \FAG$ and $\ja{\FCC}\in \FJA_{\FCC}$:
\[
\impeach{\ja{\FCC}} = \Big( \sum \{ \act{\FCC}{a} \in \FPI \mid a \in \FCC \} \Big) \bmod n,
\]


\noindent where $\sum$ is the sum operation and $n$ is the cardinality of $\FPI$.

\end{itemize}

What follows are some facts about the function $\Fsupport$:
\begin{enumerate}[label=(\arabic*),leftmargin=3.33em]

\item
\label{support well}

\textbf{Claim 1:}
\textbf{for all $\FCC \subseteq \FAG$ and $\ja{\FCC}\in \FJA_{\FCC}$, $\FBC_{i\in \support{\ja{\FCC}}}\FAA_i\subseteq \FCC$, and ${\support{\ja{\FCC}}}$ is both $\FSX\FIX$-neat and $\FSX\FIX\FDX$-neat.}

Let $\FCC \subseteq \FAG$ and $\ja{\FCC}\in \FJA_{\FCC}$.
From the definition of $\Fsupport$, we can see $\FBC_{i\in \support{\ja{\FCC}}}\FAA_i\subseteq \FCC$.
Let $i,i'\in \support{\ja{\FCC}}$ such that $i\neq i'$. To show ${\support{\ja{\FCC}}}$ is both $\FSX\FIX$-neat and $\FSX\FIX\FDX$-neat, it suffices to show $\FAA_i\cap\FAA_{i'} = \emptyset$. Assume $\FAA_i \cap \FAA_{i'} \neq \emptyset$.
Let $a\in \FAA_i \cap \FAA_{i'}$. By the definition of $\Fsupport$, $\sigma_\FCC (a) = i$ and $\act{\FCC}{a} = i'$. Then $i = i'$. We have a contradiction.

\item
\label{support mon}

\textbf{Claim 2:}
\textbf{for all $\FCC, \FCC' \subseteq \FAG$, $\ja{\FCC} \in \FJA_{\FCC}$, and $\ja{\FCC'} \in \FJA_\FCC$, if $\FCC \subseteq \FCC'$ and $\ja{\FCC}\subseteq \ja{\FCC'}$, then ${\support{\ja{\FCC}}} \subseteq \support{\ja{\FCC'}}$.}

Let $\FCC, \FCC' \subseteq \FAG$, $\ja{\FCC} \in \FJA_{\FCC}$, and $\ja{\FCC'} \in \FJA_\FCC$ such that $\FCC \subseteq \FCC'$ and $\ja{\FCC}\subseteq \ja{\FCC'}$. Let $i\in \support{\ja{\FCC}}$. Then $\FAA_i \subseteq \FCC $ and $\act{\FCC}{a} = i$ for all $a \in \FAA_i$. Then $\FAA_i \subseteq \FCC' $ and $\act{\FCC'}{a} = i$ for all $a \in \FAA_i$. Then, $i\in \support{\ja{\FCC'}}$.

\item 
\label{claim:support empty}

\textbf{Claim 3:}
\textbf{if $\FNI \neq \emptyset$, then for all $\sigma_\FAG \in \FJA_\FAG$, $\support{\ja{\FAG}} \neq \emptyset$.}

Assume $\FNI \neq \emptyset$. By the definition of standard disjunctions, there is $i \in \FNI$ such that $\FAA_i = \emptyset$. Then $i\in \support{\ja{\FAG}}$.

\end{enumerate}

\paragraph{Definition of a blueprint $\pg$}~

Define a blueprint $\pg = (\FAC_0, \FlistAG)$ as follows:
\begin{itemize}

\item 

$\FAC = \FNI \cup \FPI$;

\item 

if $\FDX \notin \FXX$, then for every $\ja{\FAG} \in \FJA_{\FAG}$:
\[
\listing{\ja{\FAG}} =
\]
\[
\begin{cases}
\Big\{ \FBW_{i\in \support{\ja{\FAG}}} \phi_i \wedge \neg \psi_j \mid j \in \FPI \text{ and } \FBC_{i\in \support{\ja{\FAG}}} \FAA_i \subseteq \FBB_j \Big\} & \parbox[t]{11em}{
\text{if $\support{\ja{\FAG}}$ is $\FXX$-neat}
} \\
\emptyset & \text{otherwise}
\end{cases}
\]

\item 

if $\FDX \in \FXX$, then for every $\ja{\FAG} \in \FJA_{\FAG}$:
\[
\listing{\ja{\FAG}} =
\]
\[
\begin{cases}
\Big\{ \FBW_{i\in \support{\ja{\FAG}}}\phi_i \wedge \neg \psi_{\Fimpeach(\ja{\FAG})} \wedge \FBW_{k \in \FPI_{0}} \neg \psi_k \Big\} & \parbox[t]{20em}{
\text{if $\support{\ja{\FAG}}$ is $\FXX$-neat and}
} \\
& \parbox[t]{20em}{
\text{$\FBC_{i \in \support{\ja{\FAG}}} \FAA_i \subseteq \FBB_{\Fimpeach(\ja{\FAG})}$}
} \vspace{15pt} \\
\Big\{ \FBW_{i\in \support{\ja{\FAG}}}\phi_i \wedge \neg \psi_{0} \wedge \FBW_{k \in \FPI_{0}} \neg \psi_k \Big\} & \parbox[t]{20em}{
\text{if $\support{\ja{\FAG}}$ is $\FXX$-neat and}
} \\
& \parbox[t]{20em}{
\text{$\FBC_{i \in \support{\ja{\FAG}}} \FAA_i \not \subseteq \FBB_{\Fimpeach(\ja{\FAG})}$}
} \vspace{15pt} \\
\emptyset & \text{otherwise}
\end{cases}
\]

\end{itemize}

\paragraph{The bluepint $\pg$ is $\FXX$-regular}~

The following fact can be easily seen:
\begin{enumerate}[label=(\arabic*),leftmargin=3.33em]
\setcounter{enumi}{3}

\item
\label{claim:nonempty neat}

\textbf{Claim 4:}
\textbf{For every $\sigma_\FAG \in \FJA_\FAG$, $\Flist_\FAG (\sigma_\FAG) \neq \emptyset$ if and only if $\Fsupport (\sigma_\FAG)$ is $\FXX$-neat.}

\end{enumerate}

We claim $\pg$ is $\FXX$-regular.
\begin{enumerate}[label=(\arabic*),leftmargin=3.33em]

\item 

\textbf{For all $\ja{\FAG}\in \FJA_{\FAG}$ and $\chi \in \listing{\ja{\FAG}}$, $\chi $ is $\FXX$-satisfiable.}

Let $\ja{\FAG}\in \FJA_{\FAG}$ and $\chi \in \listing{\ja{\FAG}}$.

\begin{itemize}

\item

Assume $\FDX \notin \FXX$. Then there is $j\in \FPI$ such that $\chi =\FBW_{i\in \support{\ja{\FAG}}}\phi_{i}\wedge \neg \psi_j$, where $\support{\ja{\FAG}}$ is $\FXX$-neat and $\FBC_{i\in\support{\ja{\FAG}}}\FAA_{i} \subseteq \FBB_j$.
By the $\FXX$-satisfiability-reduction condition, $\chi$ is $\FXX$-satisfiable.

\item 

Assume $\FDX \in \FXX$. 
Then either (a) $\chi =\FBW_{i\in \support{\ja{\FAG}}} \phi_{i} \wedge \neg \psi_{\Fimpeach(\ja{\FAG})} \wedge \FBW_{k \in \FPI_0} \neg \psi_k$, where $\support{\ja{\FAG}}$ is $\FXX$-neat and $\FBC_{i\in\support{\ja{\FAG}}}\FAA_{i}\subseteq \FBB_{\Fimpeach(\ja{\FAG})}$, or
(b) $\chi =\FBW_{i\in \support{\ja{\FAG}}} \phi_{i} \wedge \neg \psi_{0} \wedge \FBW_{k \in \FPI_0} \neg \psi_k$, where $\support{\ja{\FAG}}$ is $\FXX$-neat and $\FBC_{i\in\support{\ja{\FAG}}}$ $\FAA_{i} \not \subseteq \FBB_{\Fimpeach(\ja{\FAG})}$.
Assume (a). By the $\FXX$-satisfiablity-redution-condition, $\chi$ is $\FXX$-satisfiable.
Assume (b). Note $\FBB_0 = \FAG$. Then $\FBC_{i\in\support{\ja{\FAG}}} \FAA_{i} \subseteq \FBB_0$. By the $\FXX$-satisfiablity-redution-condition, $\chi$ is $\FXX$-satisfiable.

\end{itemize}

\item

\textbf{If $\FSX \in \FXX$, then $\pg$ is serial.} 

Assume $\FSX \in \FXX$. Let $\FCC \subseteq \FAG$. It suffices to show $\FPJA_{\FCC}\neq \emptyset$.
Let $\ja{\FCC}\in \FJA_{\FCC}$ such that $\act{\FCC}{a}=0$ for all $a\in \FCC$. It suffices to show $\Flist_\FCC (\sigma_\FCC) \neq \emptyset$.
Let $\ja{\FAG}\in \FJA_{\FAG}$ such that $\act{\FAG}{a}=0$ for all $a\in \FAG$. Then $\ja{\FCC}\subseteq \ja{\FAG}$. It suffices to show $\listing{\ja{\FAG}} \neq \emptyset$.
It can be checked $\support{\ja{\FAG}}=\{i\in\FNI\mid \FAA_{i}=\emptyset \}$ and $\support{\ja{\FAG}}$ is $\FXX$-neat. By Claim \ref{claim:nonempty neat}, $\listing{\ja{\FAG}} \neq \emptyset$.

\item 

\textbf{If $\FIX \in \FXX$, then $\pg$ is independent.} 

Assume $\FIX \in \FXX$. Let $\FCC, \FDD \subseteq \FAG$ such that $\FCC\cap\FDD=\emptyset$, $\ja{\FCC}\in \FPJA_{\FCC}$, and $\ja{\FDD}\in \FPJA_{\FDD}$. We want to show $\ja{\FCC}\cup \ja{\FDD} \in \FPJA_{\FCC\cup \FDD}$. It suffices to show $\Flist_{\FCC \cup \FDD} (\sigma_\FCC \cup \sigma_\FDD) \neq \emptyset$.
Let $\ja{\FAG} \in \FJA_{\FAG}$ such that $\ja{\FCC} \cup \ja{\FDD} \subseteq \ja{\FAG}$ and $\act{\FAG}{a}=0$ for all $a\in \FAG-(\FCC\cup \FDD)$. It suffices to show $\Flist_\FAG (\sigma_\FAG) \neq \emptyset$.

By Claim \ref{support well}, $\support{\ja{\FAG}}$ is both $\FSX\FIX$-neat and $\FSX\FIX\FDX$-neat.
\begin{itemize}

\item 

Assume $\FSX \in \FXX$. Then, it can be checked $\support{\ja{\FAG}}$ is $\FXX$-neat. By Claim \ref{claim:nonempty neat}, $\listing{\ja{\FAG}} \neq \emptyset$.

\item 

Assume $\FSX \notin \FXX$.

\begin{itemize}

\item 

Assume $\support{\ja{\FAG}}\neq \emptyset$. It can be verified $\support{\ja{\FAG}}$ is $\FXX$-neat. By Claim \ref{claim:nonempty neat}, $\listing{\ja{\FAG}} \neq \emptyset$.

\item 

Assume $\support{\ja{\FAG}}=\emptyset$. We want to show that this case is impossible.
By Claim \ref{claim:support empty}, $\FNI = \emptyset$.
Therefore, $\support{\lambda_\FAG} = \emptyset$ for all $\lambda_\FAG \in \FJA_{\FAG}$.
Note $\ja{\FCC}\in \FPJA_{\FCC}$. Then $\Flist_\FCC (\sigma_\FCC) \neq \emptyset$.
Then there is $\lambda_\FAG \in \FJA_{\FAG}$ such that $\sigma_\FCC \subseteq \lambda_\FAG$ and $\FlistAG (\lambda_\FAG) \neq \emptyset$. By Claim \ref{claim:nonempty neat}, $\support{\lambda_\FAG}$ is $\FXX$-neat. Note $\FSX \notin \FXX$. Then $\support{\lambda_\FAG} \neq \emptyset$. We have a contradiction.

\end{itemize}

\end{itemize}

\item 

\textbf{If $\FDX \in \FXX$, then $\pg$ is deterministic.}

Assume $\FDX \in \FXX$. From the definition of $\FlistAG$, we can see that for every $\sigma_\FAG \in \FPJA_\FAG$, $\Flist_\FAG (\sigma_\FAG)$ is a singleton. Then, $\pg$ is deterministic.

\end{enumerate} 

\paragraph{The standard conjunction $\gamma' \wedge\FBW_{i\in \FNI}\Fclo{\FAA_i}\phi_i \wedge \FBW_{j\in \FPI}\neg \Fclo{\FBB_j}\psi_j$ is $\FXX$-satisfiable}~

By Theorem \ref{theorem:Realizability of blueprints}, there is a pointed $\FXX$-model $(\MM,s)$, where $\MM = (\FST, \FAC, \Fout_\FAG, \Flab)$, realizing $\pg$ and $\gamma'$.
We claim $\MM,s \Vdash \gamma' \land \FBW_{i\in \FNI}\Fclo{\FAA_i} \phi_i\wedge \FBW_{j\in \FPI}\neg \Fclo{\FBB_j} \psi_j$.

We show this in two steps.

\paragraph{Step one}~

Let $i\in \FNI$. We want to show $\MM,s \Vdash \Fclo{\FAA_i} \phi_i$.
Let $\ja{\FAA_{i}} \in \FJA_{\FAA_i}$ such that $\act{\FAA_{i}}{a} = i$ for all $a\in \FAA_{i}$. It suffices to show $\sigma_{\FAA_i} \in \FPJA_{\FAA_i}$ and $\ja{\FAA_{i}}\achieve_{(\MM, s)} \phi_i$.

Let $\ja{\FAG} \in \FJA_\FAG$ such that $\act{\FAG}{a} = i$ for all $a\in \FAG$. Then $\ja{\FAA_{i}}\subseteq \ja{\FAG}$. It can be checked $\support{\ja{\FAG}} = \{i'\in \FNI\mid \FAA_{i'} = \emptyset \text{ or } i' = i\}$ and $\support{\ja{\FAG}}$ is $\FXX$-neat.
By Claim \ref{claim:nonempty neat}, $\listing{\ja{\FAG}} \neq \emptyset$. Then $\Flist_{\FAA_i} (\sigma_{\FAA_i}) \neq \emptyset$. Then $\sigma_{\FAA_i} \in \FPJA_{\FAA_i}$.

We claim $\phi_i\in \all{\FXX}{\ja{\FAA_{i}}}$. Let $\chi \in \Flist_{\FAA_i} (\sigma_{\FAA_i})$. 
Then $\chi \in \FlistAG (\lambda_\FAG)$ for some $\lambda_\FAG \in \FJA_{\FAG}$ such that $\ja{\FAA_{i}} \subseteq \lambda_\FAG$. Note $i\in \support{\ja{\FAA_i}}$. By Claim \ref{support mon}, $i\in \support{\lambda_\FAG}$. From the definition of $\FlistAG$, we can see $\phi_{i}$ is a conjunct of $\chi$. Then $\chi \vDash_\FXL \phi_{i}$. Then $\phi_i\in \all{\FXX}{\ja{\FAA_{i}}}$.

By theorem \ref{theorem:Realizability of blueprints}, $\ja{\FAA_{i}}\achieve_{(\MM, s)} \phi_i$.

\paragraph{Step two}~

Let $j \in \FPI$. We want to show $\MM,s \Vdash \neg \Fclo{\FBB_j} \psi_j$. Let $\ja{\FBB_j} \in \FPJA_{\FBB_j}$. It suffices to show $\ja{\FBB_j}\allow_{(\MM, s)} \neg \psi_j$.

Note $\Flist_{\FBB_j} (\ja{\FBB_j}) \neq \emptyset$.
Then there is $\lambda_\FAG \in \FPJA_{\FAG}$ such that $\ja{\FBB_j} \subseteq \lambda_\FAG$ and $\FlistAG (\lambda_\FAG) \neq \emptyset$. By Claim \ref{claim:nonempty neat}, $\support{\lambda_\FAG}$ is $\FXX$-neat.
By Claim \ref{support mon}, $\support{\ja{\FBB_j}} \subseteq \support{\lambda_\FAG}$. \Fblue{We claim $\support{\ja{\FBB_j}}$ is $\FXX$-neat. It is easy to check that the first and third conditions given in Definition \ref{definition:Neatness}, the definition of neatness, hold for $\support{\ja{\FBB_j}}$.

Assume $\FSX \in \FXX$. Clearly, $\support{\ja{\FBB_j}}$ meets the second condition given in Definition \ref{definition:Neatness}. Then $\support{\ja{\FBB_j}}$ is $\FXX$-neat.

Assume $\FSX \notin \FXX$. Note $\support{\lambda_\FAG}$ is $\FXX$-neat. Then $\support{\lambda_\FAG} \neq \emptyset$. Then $\FNI \neq \emptyset$ and $-1 \in \FNI$. Note $\FAA_{-1} = \emptyset$. It is easy to check $-1 \in \support{\ja{\FBB_j}}$. Then the second condition given in Definition \ref{definition:Neatness} holds for $\support{\ja{\FBB_j}}$. Then $\support{\ja{\FBB_j}}$ is $\FXX$-neat.
}

We consider two possible cases, respectively.

\medskip

\textbf{Assume $\FDX \notin \FXX$.} Let $\ja{\FAG} \in \FJA_\FAG$ such that $\ja{\FBB_j} \subseteq \ja{\FAG}$ and $\act{\FAG}{\overline{b}}=j$ for all $\overline{b}\in \FAG-\FBB_j$.

We claim $\support{\ja{\FAG}}=\support{\ja{\FBB_j}}$. By Claim \ref{support mon}, $\support{\ja{\FBB_j}}\subseteq \support{\ja{\FAG}}$. Let $i \in \support{\ja{\FAG}}$. Then $\act{\FAG}{a}=i$ for all $a\in \FAA_{i}$.
\begin{itemize}

\item 

Assume $\FAA_{i}\subseteq \FBB_j$. Note $\ja{\FBB_j}\subseteq \ja{\FAG}$. Then $\act{\FBB_j}{a}=i$ for all $a\in \FAA_{i}$. Then $i\in \support{\ja{\FBB_j}}$.

\item 

Assume $\FAA_{i}\nsubseteq \FBB_j$. We want to show that this case is impossible. Let $\overline{b}\in \FAA_{i}-\FBB_j$. Then $\overline{b}\in \FAG-\FBB_j$. Then $\act{\FAG}{\overline{b}}=j$.
Note $\overline{b} \in \FAA_i$. Then $\act{\FAG}{\overline{b}}=i$. Then $j = i$. Note $i \in \FNI$ and $j \in \FPI$. Then $\FNI \cap \FPI \neq \emptyset$. We have a contradiction.

\end{itemize}

Then $\support{\ja{\FAG}}=\support{\ja{\FBB_j}}$.

Note $\support{\ja{\FBB_j}}$ is $\FXX$-neat. Then $\support{\ja{\FAG}}$ is $\FXX$-neat.
By Claim \ref{support well}, $\FBC_{i\in \support{\FBB_j}} \FAA_i$ $\subseteq \FBB_j$. Then $\FBC_{i\in \support{\sigma_\FAG}} \FAA_i \subseteq \FBB_j$.
Then $\FBW_{i \in \support{\ja{\FAG}}}\phi_i\wedge \neg \psi_j$ is in $ \listing{\ja{\FAG}}$.
Note $\ja{\FBB_j} \subseteq \ja{\FAG}$. Then $\FBW_{i \in \support{\ja{\FAG}}}\phi_i\wedge \neg \psi_j$ is in $ \Flist_{\FBB_j} (\ja{\FBB_j})$.
Clearly, $\FBW_{i \in \support{\ja{\FAG}}}\phi_i\wedge \neg \psi_j \vDash_\FXL \neg \psi_j$.
Then $\neg \psi_j \in \ex{\FXX}{\ja{\FBB_j}}$.

By Theorem \ref{theorem:Realizability of blueprints}, $\ja{\FBB_j} \allow_{(\MM,s)} \neg \psi_j$.

\medskip

\textbf{Assume $\FDX \in \FXX$.} Consider whether $\FBB_j=\FAG$.
\begin{itemize}

\item

Assume $\FBB_j=\FAG$. Then $j\in\FPI_0$.
Note $\FlistAG (\lambda_\FAG) \neq \emptyset$. Let $\chi \in \FlistAG (\lambda_\FAG)$.
From the definition of $\Flist_\FAG$, we can see $\neg \psi_j$ is a conjunct of $\chi$. Clearly, $\chi \vDash_\FXL \neg \psi_j$.
Note $\sigma_{\FBB_j} \subseteq \lambda_\FAG$. Then $\chi \in \Flist_{\FBB_j} (\sigma_{\FBB_j})$. Then $\neg \psi_j \in \ex{\FXX}{\ja{\FBB_j}}$.

By Theorem \ref{theorem:Realizability of blueprints}, $\ja{\FBB_j} \allow_{(\MM,s)} \neg \psi_j$.

\item

Assume $\FBB_j\neq \FAG$. Then there is $\overline{b}\in \FAG-\FBB_j$. Let $\ja{\FAG} \in \FJA_\FAG$ meeting the following conditions:
\begin{enumerate}[label=(\arabic*)]

\item

$\act{\FAG}{b}=\act{\FBB_j}{b}$ for all $b\in \FBB_j$;

\item

$\act{\FAG}{a}=0$ for all $a \in \FAG-\FBB_j$ such that $a\neq \overline{b}$.

\item 

$\act{\FAG}{\overline{b}}=
\begin{cases}
j-\Fimpeach(\ja{\FBB_j}) & \text{if } j \geq \Fimpeach(\ja{\FBB_j}) \\
j - \Fimpeach(\ja{\FBB_j})+n & \text{otherwise}
\end{cases}$

where $n$ is the cardinality of $\FPI$.

\end{enumerate}

Note $j < n$. It is easy to see $\sigma_\FAG (\overline{b}) \in \FPI$.
Clearly, $\ja{\FBB_j}\subseteq \ja{\FAG}$.

We claim $\Fimpeach(\ja{\FAG}) = j$.

Note $\Fimpeach(\ja{\FAG}) = \sum \{\sigma_\FAG (a) \in \FPI \mid a \in \FAG\} \bmod n = \Big( \sum \{\sigma_\FAG (a) \in \FPI \mid a \in \FBB_j\} + \act{\FAG}{\overline{b}} \Big) \bmod n$.
It suffices to show $\bigg( \Big( \sum \{\sigma_\FAG (a) \in \FPI \mid a \in \FBB_j\} \bmod n \Big) + \Big( \act{\FAG}{\overline{b}} \bmod n \Big) \bigg) \bmod n = j$.

Assume $j \geq \Fimpeach(\ja{\FBB_j})$.
$\Big( \sum \{\sigma_\FAG (a) \in \FPI \mid a \in \FBB_j\} \bmod n \Big) + \Big( (j - \Fimpeach(\ja{\FBB_j}))$ $\bmod n \Big) = \Fimpeach(\ja{\FBB_j}) + (j - \Fimpeach(\ja{\FBB_j})) = j$.
$j \bmod n = j$.
Thus, $\Fimpeach(\ja{\FAG})$ $= j$.

Assume $j < \Fimpeach(\ja{\FBB_j})$.
$\Big( \sum \{\sigma_\FAG (a) \in \FPI \mid a \in \FBB_j\} \bmod n \Big) + \Big( (j - \Fimpeach(\ja{\FBB_j}) + n) \bmod n \Big) = \Fimpeach(\ja{\FBB_j}) + (j - \Fimpeach(\ja{\FBB_j}) + n) = j + n$.
$j + n \bmod n = j$.
Then, $\Fimpeach(\ja{\FAG}) = j$.

We claim $\support{\ja{\FAG}}=\support{\ja{\FBB_j}}$. By Claim \ref{support mon}, $\support{\ja{\FBB_j}}\subseteq \support{\ja{\FAG}}$. Let $i \in \support{\ja{\FAG}}$. Then $\act{\FAG}{a}=i$ for all $a\in \FAA_{i}$.
\begin{itemize}

\item 

Assume $\FAA_{i}\subseteq \FBB_j$. Note $\ja{\FBB_j}\subseteq \ja{\FAG}$. Then $\act{\FBB_j}{a}=i$ for all $a\in \FAA_{i}$. Then $i\in \support{\ja{\FBB_j}}$.

\item 

Assume $\FAA_{i}\nsubseteq \FBB_j$. We want to show this case is impossible. Let $c \in \FAA_{i}-\FBB_j$. Then $\act{\FAG}{c}=i$. 
Note $c \in \FAG-\FBB_j$. Then $\act{\FAG}{c} \in \FPI$, that is, $i \in \FPI$. We have a contradiction.

\end{itemize}

Then $\support{\ja{\FAG}}=\support{\ja{\FBB_j}}$.

Note $\support{\ja{\FBB_j}}$ is $\FXX$-neat. Then $\support{\ja{\FAG}}$ is $\FXX$-neat.
Note $\FBC_{i\in \support{\ja{\FBB_j}}}\FAA_i\subseteq \FBB_j$. Then $\FBC_{i\in \support{\ja{\FAG}}}\FAA_i \subseteq \FBB_j$.
Therefore, $\FBW_{i\in \support{\ja{\FBB_j}}}\phi_i \wedge \neg \psi_j \wedge \FBW_{k \in \FPI_0} \neg \psi_{k} \in \listing{\ja{\FAG}}$.
Note $\sigma_{\FBB_j} \subseteq \sigma_\FAG$. Then $\FBW_{i\in \support{\ja{\FBB_j}}}\phi_i \wedge \neg \psi_j \wedge \FBW_{k \in \FPI_0} \neg \psi_{k} \in \Flist_{\FBB_j} (\ja{\FBB_j})$.
Clearly, $\FBW_{i\in \support{\ja{\FBB_j}}}\phi_i \wedge \neg \psi_j \wedge \FBW_{k \in \FPI_0} \neg \psi_{k} \models_\FXL \neg \psi_j$.
Then $\neg \psi_j\in \ex{X}{\ja{\FBB_j}}$.

By Theorem \ref{theorem:Realizability of blueprints}, $\ja{\FBB_j} \allow_{(\MM,s)} \neg \psi_j$.

\end{itemize}

\end{proof}

\subsection{Upward derivability lemma}

\begin{lemma}[Upward derivability]
\label{lemma:Upward derivability}

Let $\FXX \in \FES$ and $\gamma \lor (\FBW_{i \in \FNI} \Fclo{\FAA_i} \phi_i \to \FBV_{j \in \FPI} \Fclo{\FBB_j} \psi_j)$ be a standard disjunction.

Then, $\vdash_{\FXL} \gamma \lor (\FBW_{i \in \FNI} \Fclo{\FAA_i} \phi_i \to \FBV_{j \in \FPI} \Fclo{\FBB_j} \psi_j)$, if the following condition, called the \Fdefs{$\FXX$-derivability-reduction condition}, is met:
\begin{enumerate}[label=(\arabic*),leftmargin=3.33em]

\item 

$\vdash_{\FXL} \gamma$, or 

\item 

the following two conditions hold:
\begin{enumerate}

\item

if $\mathsf{D} \notin \FXX$, then there is $\FNI' \subseteq \FNI$ and $j \in \FPI$ such that $\FNI'$ is $\FXX$-neat, $\FBC_{i\in \FNI'} \FAA_i\subseteq \FBB_j$, and $\vdash_\FXL \FBW_{i\in \FNI'}\phi_i \to \psi_j$;

\item 

if $\mathsf{D} \in \FXX$, then there is $\FNI'\subseteq \FNI$ and $j\in \FPI$ such that $\FNI'$ is $\FXX$-neat, $\FBC_{i\in \FNI'} \FAA_i \subseteq \FBB_{j}$, and $\vdash_\FXL \FBW_{i \in \FNI'} \phi_i \to (\psi_{j} \lor \FBV_{k \in \FPI_0} \psi_k)$.

\end{enumerate}

\end{enumerate}

\end{lemma}

\begin{proof}~

Assume $\vdash_\FXL \gamma$. It is easy to see $\vdash_\FXL \gamma \lor (\FBW_{i \in \FNI} \Fclo{\FAA_i} \phi_i \to \FBV_{j \in \FPI} \Fclo{\FBB_j} \psi_j)$.

Assume the following two conditions hold:
\begin{enumerate}[label=(\arabic*),leftmargin=3.33em]

\item 

if $\FDX \notin \FXX$, then there is $\FNI'\subseteq \FNI$ and $j\in \FPI$ such that $\FNI'$ is $\FXX$-neat, $\FBC_{i\in \FNI'}\FAA_i\subseteq \FBB_j$, and $\vdash_\FXL \FBW_{i\in \FNI'}\phi_i \to \psi_j$;

\item 

if $\FDX \in \FXX$, then there is $\FNI'\subseteq \FNI$ and $j\in \FPI$ such that $\FNI'$ is $\FXX$-neat, $\FBC_{i\in \FNI'}\FAA_i\subseteq \FBB_{j}$, and $\vdash_\FXL \FBW_{i\in \FNI'}\phi_i \to (\psi_{j}\vee \FBV_{k \in \FPI_0} \psi_k)$.

\end{enumerate}

We respectively consider whether $\FDX \in \FXX$.
\begin{enumerate}[label=(\arabic*),leftmargin=3.33em]

\item 

Assume $\FDX \notin \FXX$. Then, there is $\FNI'\subseteq \FNI$ and $j\in \FPI$ such that $\FNI'$ is $\FXX$-neat, $\FBC_{i\in \FNI'}\FAA_i\subseteq \FBB_j$, and $\vdash_\FXL \FBW_{i\in \FNI'}\phi_i \to \psi_j$. By Rule $\mathtt{Mon}$, $\vdash_\FXL \Fclo{\FBC_{i \in \FNI'} \FAA_i} \FBW_{i \in \FNI'} \phi_i \rightarrow \Fclo{\FBB_j} \psi_{j}$. We respectively consider whether $\FSX$ is in $\FXX$ and whether $\FIX$ is in $\FXX$.

\begin{enumerate}

\item 

Assume $\FSX \notin \FXX$ and $\FIX \notin \FXX$.
Note in this case, $\FNI'$ is $\epsilon$-neat: $\FNI'$ is not empty, and for all $i,i'\in \FNI'$, if $\FAA_{i}\neq \emptyset $ and $\FAA_{i'}\neq \emptyset$, then $i=i'$.
By repeated applications of Axiom $\mathtt{A}\text{-}\mathtt{SIA}$, we can get $\vdash_\FXL \FBW_{i \in \FNI'} \Fclo{\FAA_i} \phi_i \rightarrow \Fclo{\FBC_{i \in \FNI'} \FAA_i} \FBW_{i \in \FNI'} \phi_i$.
Clearly, $\vdash_\FXL \FBW_{i \in \FNI} \Fclo{\FAA_i} \phi_i \rightarrow \FBW_{i \in \FNI'} \Fclo{\FAA_i} \phi_i$, and $\vdash_\FXL \Fclo{\FBB_j} \psi_{j} \rightarrow \FBV_{j \in \FPI} \Fclo{\FBB_j} \psi_j$.
Then $\vdash_\FXL \FBW_{i \in \FNI} \Fclo{\FAA_i} \phi_i \rightarrow \FBV_{j \in \FPI} \Fclo{\FBB_j} \psi_j$.
Then, $\vdash_\FXL \gamma \lor (\FBW_{i \in \FNI} \Fclo{\FAA_i} \phi_i \rightarrow \FBV_{j \in \FPI} \Fclo{\FBB_j} \psi_j)$.

\item

Assume $\FSX \notin \FXX$ and $\FIX \in \FXX$.
Note in this case, Axiom $\mathtt{A}\text{-}\mathtt{IA}$ is an axiom of $\FXL$, and $\FNI'$ is $\FIX$-neat: $\FNI'$ is not empty, and for all $i,i'\in \FNI'$, if $i\neq i'$, then $\FAA_{i}\cap \FAA_{i'}= \emptyset$.
By Axiom $\mathtt{A}\text{-}\mathtt{IA}$, we can get $\vdash_\FXL \FBW_{i \in \FNI'} \Fclo{\FAA_i} \phi_i \rightarrow \Fclo{\FBC_{i \in \FNI'} \FAA_i} \FBW_{i \in \FNI'} \phi_i$.
Clearly, $\vdash_\FXL \FBW_{i \in \FNI} \Fclo{\FAA_i} \phi_i \rightarrow \FBW_{i \in \FNI'} \Fclo{\FAA_i} \phi_i$, and $\vdash_\FXL \Fclo{\FBB_j} \psi_{j} \rightarrow \FBV_{j \in \FPI} \Fclo{\FBB_j} \psi_j$.
Then $\vdash_\FXL \FBW_{i \in \FNI} \Fclo{\FAA_i} \phi_i \rightarrow \FBV_{j \in \FPI} \Fclo{\FBB_j} \psi_j$.
Then, $\vdash_\FXL \gamma \lor (\FBW_{i \in \FNI} \Fclo{\FAA_i} \phi_i \rightarrow \FBV_{j \in \FPI} \Fclo{\FBB_j} \psi_j)$.

\item

Assume $\FSX \in \FXX$ and $\FIX \notin \FXX$.
Note in this case, $\mathtt{A}\text{-}\mathtt{Ser}$ is an axiom of $\FXL$, and $\FNI'$ is $\FSX$-neat: for all $i,i'\in \FNI'$, if $\FAA_{i}\neq \emptyset $ and $\FAA_{i'} \neq \emptyset$, then $i=i'$. 

Assume $\FNI'$ is empty. Note in this case, $\Fclo{\FBC_{i \in \FNI'} \FAA_i} \FBW_{i \in \FNI'} \phi_i \rightarrow \Fclo{\FBB_j} \psi_{j}$ is $\Fclo{\emptyset} \top \rightarrow \Fclo{\FBB_j} \psi_{j}$.
Then $\vdash_\FXL \Fclo{\emptyset} \top \rightarrow \Fclo{\FBB_j} \psi_{j}$.
By Axiom $\mathtt{A}\text{-}\mathtt{Ser}$, we have $\vdash_\FXL \Fclo{\FBB_j} \psi_{j}$. Then $\vdash_\FXL \FBV_{j \in \FPI} \Fclo{\FBB_j} \psi_j$. Then $\vdash_\FXL \FBW_{i \in \FNI} \Fclo{\FAA_i} \phi_i \rightarrow \FBV_{j \in \FPI} \Fclo{\FBB_j} \psi_j$. Then, $\vdash_\FXL \gamma \lor (\FBW_{i \in \FNI} \Fclo{\FAA_i} \phi_i \rightarrow \FBV_{j \in \FPI} \Fclo{\FBB_j} \psi_j)$.

Assume $\FNI'$ is not empty.
By Axiom $\mathtt{A}\text{-}\mathtt{SIA}$, $\vdash_\FXL \FBW_{i \in \FNI'} \Fclo{\FAA_i} \phi_i \rightarrow \Fclo{\FBC_{i \in \FNI'} \FAA_i} \FBW_{i \in \FNI'} \phi_i$.
Clearly, $\vdash_\FXL \FBW_{i \in \FNI} \Fclo{\FAA_i} \phi_i \rightarrow \FBW_{i \in \FNI'} \Fclo{\FAA_i} \phi_i$, and $\vdash_\FXL \Fclo{\FBB_j} \psi_{j} \rightarrow \FBV_{j \in \FPI} \Fclo{\FBB_j} \psi_j$.
Then, $\vdash_\FXL \FBW_{i \in \FNI} \Fclo{\FAA_i} \phi_i$ $\rightarrow \FBV_{j \in \FPI} \Fclo{\FBB_j} \psi_j$.
Then, $\vdash_\FXL \gamma \lor (\FBW_{i \in \FNI} \Fclo{\FAA_i} \phi_i \rightarrow \FBV_{j \in \FPI} \Fclo{\FBB_j} \psi_j)$.

\item

Assume $\FSX \in \FXX$ and $\FIX \in \FXX$.
Note in this case, Axioms $\mathtt{A}\text{-}\mathtt{Ser}$ and $\mathtt{A}\text{-}\mathtt{IA}$ are axioms of $\FXL$, and $\FNI'$ is $\FSX\FIX$-neat: for all $i,i'\in \FNI'$, if $i\neq i'$, then $\FAA_{i}\cap \FAA_{i'}= \emptyset$.

Assume $\FNI'$ is empty. Note in this case, $\Fclo{\FBC_{i \in \FNI'} \FAA_i} \FBW_{i \in \FNI'} \phi_i \rightarrow \Fclo{\FBB_j} \psi_{j}$ is $\Fclo{\emptyset} \top \rightarrow \Fclo{\FBB_j} \psi_{j}$.
Then $\vdash_\FXL \Fclo{\emptyset} \top \rightarrow \Fclo{\FBB_j} \psi_{j}$.
By Axiom $\mathtt{A}\text{-}\mathtt{Ser}$, we have $\vdash_\FXL \Fclo{\FBB_j} \psi_{j}$. Then $\vdash_\FXL \FBV_{j \in \FPI} \Fclo{\FBB_j} \psi_j$.
Then $\vdash_\FXL \FBW_{i \in \FNI} \Fclo{\FAA_i} \phi_i \rightarrow \FBV_{j \in \FPI} \Fclo{\FBB_j} \psi_j$. Then, $\vdash_\FXL \gamma \lor (\FBW_{i \in \FNI} \Fclo{\FAA_i} \phi_i \rightarrow \FBV_{j \in \FPI} \Fclo{\FBB_j} \psi_j)$.

Assume $\FNI'$ is not empty.
By Axiom $\mathtt{A}\text{-}\mathtt{IA}$, $\vdash_\FXL \FBW_{i \in \FNI'} \Fclo{\FAA_i} \phi_i \rightarrow \Fclo{\FBC_{i \in \FNI'} \FAA_i} \FBW_{i \in \FNI'} \phi_i$.
Clearly, $\vdash_\FXL \FBW_{i \in \FNI} \Fclo{\FAA_i} \phi_i \rightarrow \FBW_{i \in \FNI'} \Fclo{\FAA_i} \phi_i$, and $\vdash_\FXL \Fclo{\FBB_j} \psi_{j} \rightarrow \FBV_{j \in \FPI} \Fclo{\FBB_j} \psi_j$.
Then, $\vdash_\FXL \FBW_{i \in \FNI} \Fclo{\FAA_i} \phi_i$ $\rightarrow \FBV_{j \in \FPI} \Fclo{\FBB_j} \psi_j$.
Then, $\vdash_\FXL \gamma \lor (\FBW_{i \in \FNI} \Fclo{\FAA_i} \phi_i \rightarrow \FBV_{j \in \FPI} \Fclo{\FBB_j} \psi_j)$.

\end{enumerate}

\item 

Assume $\FDX \in \FXX$. Then then there is $\FNI'\subseteq \FNI$ and $j \in \FPI$ such that $\FNI'$ is $\FXX$-neat, $\FBC_{i\in \FNI'}\FAA_i\subseteq \FBB_{j}$, and $\vdash_\FXL \FBW_{i\in \FNI'}\phi_i \to (\psi_{j}\vee \FBV_{k \in \FPI_0} \psi_k)$.
By Rule $\mathtt{Mon}$, $\vdash_\FXL \Fclo{\FBC_{i \in \FNI'} \FAA_i} \FBW_{i \in \FNI'} \phi_i \rightarrow \Fclo{\FBB_j} (\psi_{j} \lor \FBV_{k \in \FPI_0} \psi_k)$. Note in this case, Axiom $\mathtt{A}\text{-}\mathtt{Det}$ is an axiom of $\FXL$.
We respectively consider whether $\FSX$ is in $\FXX$ and whether $\FIX$ is in $\FXX$. 

\begin{enumerate}

\item 

Assume $\FSX \notin \FXX$ and $\FIX \notin \FXX$.
Note in this case, $\FNI'$ is $\FDX$-neat: $\FNI'$ is not empty, and for all $i,i'\in \FNI'$, if $\FAA_{i}\neq \emptyset $ and $\FAA_{i'}\neq \emptyset$, then $i=i'$.
By repeated applications of Axiom $\mathtt{A}\text{-}\mathtt{SIA}$, we can get $\vdash_\FXL \FBW_{i \in \FNI'} \Fclo{\FAA_i} \phi_i \rightarrow \Fclo{\FBC_{i \in \FNI'} \FAA_i} \FBW_{i \in \FNI'} \phi_i$.
By repeated applications of Axiom $\mathtt{A}\text{-}\mathtt{Det}$, we can get $\vdash_\FXL \Fclo{\FBB_j} (\psi_{j} \lor \FBV_{k \in \FPI_0} \psi_k) \rightarrow (\Fclo{\FBB_j} \psi_{j} \lor \FBV_{k \in \FPI_0} \Fclo{\FAG} \psi_k)$.
Clearly, $\vdash_\FXL \FBW_{i \in \FNI} \Fclo{\FAA_i} \phi_i \rightarrow \FBW_{i \in \FNI'} \Fclo{\FAA_i} \phi_i$, and $\vdash_\FXL (\Fclo{\FBB_j} \psi_{j} \lor \FBV_{k \in \FPI_0} \Fclo{\FAG} \psi_k) \rightarrow \FBV_{j \in \FPI} \Fclo{\FBB_j} \psi_j$.
Then $\vdash_\FXL \FBW_{i \in \FNI} \Fclo{\FAA_i} \phi_i \rightarrow \FBV_{j \in \FPI} \Fclo{\FBB_j} \psi_j$.
Then, $\vdash_\FXL \gamma \lor (\FBW_{i \in \FNI} \Fclo{\FAA_i} \phi_i \rightarrow \FBV_{j \in \FPI} \Fclo{\FBB_j} \psi_j)$.

\item 

Assume $\FSX \notin \FXX$ and $\FIX \in \FXX$.
Note in this case, Axiom $\mathtt{A}\text{-}\mathtt{IA}$ is an axiom of $\FXL$, and $\FNI'$ is $\FIX\FDX$-neat: $\FNI'$ is not empty, and for all $i,i'\in \FNI'$, if $i\neq i'$, then $\FAA_{i}\cap \FAA_{i'}= \emptyset$.
By repeated applications of Axiom $\mathtt{A}\text{-}\mathtt{IA}$, we can get $\vdash_\FXL \FBW_{i \in \FNI'} \Fclo{\FAA_i} \phi_i \rightarrow \Fclo{\FBC_{i \in \FNI'} \FAA_i} \FBW_{i \in \FNI'} \phi_i$.
By repeated applications of Axiom $\mathtt{A}\text{-}\mathtt{Det}$, we can get $\vdash_\FXL \Fclo{\FBB_j} (\psi_{j} \lor \FBV_{k \in \FPI_0} \psi_k) \rightarrow (\Fclo{\FBB_j} \psi_{j} \lor \FBV_{k \in \FPI_0} \Fclo{\FAG} \psi_k)$.
Clearly, $\vdash_\FXL \FBW_{i \in \FNI} \Fclo{\FAA_i} \phi_i \rightarrow \FBW_{i \in \FNI'} \Fclo{\FAA_i} \phi_i$, and $\vdash_\FXL (\Fclo{\FBB_j} \psi_{j} \lor \FBV_{k \in \FPI_0} \Fclo{\FAG} \psi_k) \rightarrow \FBV_{j \in \FPI} \Fclo{\FBB_j} \psi_j$.
Then $\vdash_\FXL \FBW_{i \in \FNI} \Fclo{\FAA_i} \phi_i \rightarrow \FBV_{j \in \FPI} \Fclo{\FBB_j} \psi_j$.
Then, $\vdash_\FXL \gamma \lor (\FBW_{i \in \FNI} \Fclo{\FAA_i} \phi_i \rightarrow \FBV_{j \in \FPI} \Fclo{\FBB_j} \psi_j)$.

\item

Assume $\FSX \in \FXX$ and $\FIX \notin \FXX$.
Note in this case, Axiom $\mathtt{A}\text{-}\mathtt{Ser}$ is an axiom of $\FXL$, and $\FNI'$ is $\FSX\FDX$-neat: for all $i,i'\in \FNI'$, if $\FAA_{i}\neq \emptyset $ and $\FAA_{i'} \neq \emptyset$, then $i=i'$.

Assume $\FNI'$ is empty. Note in this case, $\Fclo{\FBC_{i \in \FNI'} \FAA_i} \FBW_{i \in \FNI'} \phi_i \rightarrow \Fclo{\FBB_j} (\psi_{j} \lor \FBV_{k \in \FPI_0} \psi_k)$ is $\Fclo{\emptyset} \top \rightarrow \Fclo{\FBB_j} (\psi_{j} \lor \FBV_{k \in \FPI_0} \psi_k)$.
Then $\vdash_\FXL \Fclo{\emptyset} \top \rightarrow \Fclo{\FBB_j} (\psi_{j} \lor \FBV_{k \in \FPI_0} \psi_k)$.
By Axiom $\mathtt{A}\text{-}\mathtt{Ser}$, $\vdash_\FXL \Fclo{\FBB_j} (\psi_{j} \lor \FBV_{k \in \FPI_0} \psi_k)$.
Then $\vdash_\FXL \FBV_{j \in \FPI} \Fclo{\FBB_j} \psi_j$.
Then $\vdash_\FXL \FBW_{i \in \FNI} \Fclo{\FAA_i} \phi_i \rightarrow \FBV_{j \in \FPI} \Fclo{\FBB_j} \psi_j$.
Then, $\vdash_\FXL \gamma \lor (\FBW_{i \in \FNI} \Fclo{\FAA_i} \phi_i \rightarrow \FBV_{j \in \FPI} \Fclo{\FBB_j} \psi_j)$.

Assume $\FNI'$ is not empty.
By repeated applications of Axiom $\mathtt{A}\text{-}\mathtt{SIA}$, we can get $\vdash_\FXL \FBW_{i \in \FNI'} \Fclo{\FAA_i} \phi_i \rightarrow \Fclo{\FBC_{i \in \FNI'} \FAA_i} \FBW_{i \in \FNI'} \phi_i$.
By repeated applications of Axiom $\mathtt{A}\text{-}\mathtt{Det}$, we can get $\vdash_\FXL \Fclo{\FBB_j} (\psi_{j} \lor \FBV_{k \in \FPI_0} \psi_k) \rightarrow (\Fclo{\FBB_j} \psi_{j} \lor \FBV_{k \in \FPI_0} \Fclo{\FAG} \psi_k)$.
Clearly, $\vdash_\FXL \FBW_{i \in \FNI} \Fclo{\FAA_i} \phi_i \rightarrow \FBW_{i \in \FNI'} \Fclo{\FAA_i} \phi_i$, and $\vdash_\FXL (\Fclo{\FBB_j} \psi_{j} \lor \FBV_{k \in \FPI_0} \Fclo{\FAG} \psi_k) \rightarrow \FBV_{j \in \FPI} \Fclo{\FBB_j} \psi_j$.
Then $\vdash_\FXL \FBW_{i \in \FNI} \Fclo{\FAA_i} \phi_i \rightarrow \FBV_{j \in \FPI} \Fclo{\FBB_j} \psi_j$.
Then, $\vdash_\FXL \gamma \lor (\FBW_{i \in \FNI} \Fclo{\FAA_i} \phi_i \rightarrow \FBV_{j \in \FPI}$ $\Fclo{\FBB_j} \psi_j)$.

\item

Assume $\FSX \in \FXX$ and $\FIX \in \FXX$.
Note in this case, Axioms $\mathtt{A}\text{-}\mathtt{Ser}$ and $\mathtt{A}\text{-}\mathtt{IA}$ are axioms of $\FXL$, and $\FNI'$ is $\FSX\FIX\FDX$-neat: for all $i,i'\in \FNI'$, if $i\neq i'$, then $\FAA_{i}\cap \FAA_{i'}= \emptyset$.

Assume $\FNI'$ is empty. Note in this case, $\Fclo{\FBC_{i \in \FNI'} \FAA_i} \FBW_{i \in \FNI'} \phi_i \rightarrow \Fclo{\FBB_j} (\psi_{j} \lor \FBV_{k \in \FPI_0} \psi_k)$ is $\Fclo{\emptyset} \top \rightarrow \Fclo{\FBB_j} (\psi_{j} \lor \FBV_{k \in \FPI_0} \psi_k)$.
Then $\vdash_\FXL \Fclo{\emptyset} \top \rightarrow \Fclo{\FBB_j} (\psi_{j} \lor \FBV_{k \in \FPI_0} \psi_k)$. 
By Axiom $\mathtt{A}\text{-}\mathtt{Ser}$, $\vdash_\FXL \Fclo{\FBB_j} (\psi_{j} \lor \FBV_{k \in \FPI_0} \psi_k)$.
Then $\vdash_\FXL \FBV_{j \in \FPI} \Fclo{\FBB_j} \psi_j$.
Then, $\vdash_\FXL \FBW_{i \in \FNI} \Fclo{\FAA_i} \phi_i \rightarrow \FBV_{j \in \FPI} \Fclo{\FBB_j} \psi_j$.
Then, $\vdash_\FXL \gamma \lor (\FBW_{i \in \FNI} \Fclo{\FAA_i} \phi_i \rightarrow \FBV_{j \in \FPI} \Fclo{\FBB_j} \psi_j)$.

Assume $\FNI'$ is not empty.
By repeated applications of Axiom $\mathtt{A}\text{-}\mathtt{IA}$, we can get $\vdash_\FXL \FBW_{i \in \FNI'} \Fclo{\FAA_i} \phi_i \rightarrow \Fclo{\FBC_{i \in \FNI'} \FAA_i} \FBW_{i \in \FNI'} \phi_i$.
By repeated applications of Axiom $\mathtt{A}\text{-}\mathtt{Det}$, we can get $\vdash_\FXL \Fclo{\FBB_j} (\psi_{j} \lor \FBV_{k \in \FPI_0} \psi_k) \rightarrow (\Fclo{\FBB_j} \psi_{j} \lor \FBV_{k \in \FPI_0} \Fclo{\FAG} \psi_k)$.
Clearly, $\vdash_\FXL \FBW_{i \in \FNI} \Fclo{\FAA_i} \phi_i \rightarrow \FBW_{i \in \FNI'} \Fclo{\FAA_i} \phi_i$, and $\vdash_\FXL (\Fclo{\FBB_j} \psi_{j} \lor \FBV_{k \in \FPI_0} \Fclo{\FAG} \psi_k) \rightarrow \FBV_{j \in \FPI} \Fclo{\FBB_j} \psi_j$.
Then $\vdash_\FXL \FBW_{i \in \FNI} \Fclo{\FAA_i} \phi_i \rightarrow \FBV_{j \in \FPI} \Fclo{\FBB_j} \psi_j$.
Then, $\vdash_\FXL \gamma \lor (\FBW_{i \in \FNI} \Fclo{\FAA_i} \phi_i \rightarrow \FBV_{j \in \FPI}$ $\Fclo{\FBB_j} \psi_j)$.

\end{enumerate}

\end{enumerate}

\end{proof}

\subsection{Completeness by induction}

\begin{theorem}[Soundness and completeness of $\FXL$]
For any $\FXX \in \FES$, the axiomatic system for $\FXL$ given in Definition \ref{definition:Axiomatic systems for XL} is sound and complete with respect to the set of $\FXX$-valid formulas in $\Phi$.
\end{theorem}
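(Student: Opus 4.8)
The plan is to prove soundness and completeness separately, with completeness being the substantive part, carried out by induction on the modal depth of formulas.

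\paragraph{Soundness.} First I would verify that each axiom is $\FXX$-valid and each rule preserves $\FXX$-validity. The propositional tautologies, $\mathtt{MP}$, and the basic schemata ($\mathtt{A}\text{-}\mathtt{NAAA}$, $\mathtt{A}\text{-}\mathtt{MG}$, $\mathtt{A}\text{-}\mathtt{MC}$, $\mathtt{CN}$) are valid on all general concurrent game models; this is essentially the soundness part of $\FMCL$ established in \cite{li2024minimal}. For the three extra schemata, I would check that $\mathtt{A}\text{-}\mathtt{Ser}$ is valid on serial frames (using $\Faja_\FCC(s)\neq\emptyset$ and $\Fout_\FCC(s,\sigma_\FCC)\neq\emptyset$ for available $\sigma_\FCC$), that $\mathtt{A}\text{-}\mathtt{IA}$ is valid on independent frames (merging witnessing joint actions of the disjoint coalitions and using that $\Fout_{\FCC\cup\FDD}(s,\sigma_\FCC\cup\sigma_\FDD)\subseteq\Fout_\FCC(s,\sigma_\FCC)\cap\Fout_\FDD(s,\sigma_\FDD)$), and that $\mathtt{A}\text{-}\mathtt{Det}$ is valid on deterministic frames (if $\sigma_\FCC$ witnesses $\Fclo{\FCC}(\phi\lor\psi)$ but no available action of $\FCC$ ensures $\phi$, then every outcome of $\sigma_\FCC$ in which $\psi$ fails would have to witness $\phi$ at some $\FAG$-extension, and by determinism a single outcome state at which $\psi$ holds is forced, giving $\Fclo{\FAG}\psi$). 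This is routine.

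\paragraph{Completeness.} It suffices to show that every $\FXX$-valid $\phi\in\Phi$ is $\FXL$-derivable. I would argue by strong induction on the modal depth $d(\phi)$. For the base case $d(\phi)=0$, $\phi$ is a Boolean combination of atoms, so $\models_\FXL\phi$ implies $\phi$ is a propositional tautology, hence derivable by $\mathtt{A}\text{-}\mathtt{Tau}$. For the inductive step, assume the claim for all formulas of modal depth $<d$ and let $\phi$ be $\FXX$-valid with $d(\phi)=d\geq 1$. By the Normal Form Lemma (Lemma~\ref{lemma:normal-form}), pick $\phi'$ with $\vdash_\FXL\phi\leftrightarrow\phi'$, $d(\phi')=d$, and $\phi'=\chi_0\wedge\dots\wedge\chi_k$ with each $\chi_m$ a standard disjunction. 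By soundness, $\phi'$ is $\FXX$-valid, hence each $\chi_m$ is $\FXX$-valid. Fix $m$ and write $\chi_m=\gamma\lor(\FBW_{i\in\FNI}\Fclo{\FAA_i}\phi_i\to\FBV_{j\in\FPI}\Fclo{\FBB_j}\psi_j)$. Since $\models_\FXL\chi_m$, the Downward Validity Lemma (Lemma~\ref{lemma:Downward validity}) gives the $\FXX$-validity-reduction-condition for $\chi_m$. Every formula appearing in that condition --- namely $\gamma$, and the implications $\FBW_{i\in\FNI'}\phi_i\to\psi_j$ (or $\FBW_{i\in\FNI'}\phi_i\to(\psi_j\lor\FBV_{k\in\FPI_0}\psi_k)$) --- has modal depth strictly less than $d(\chi_m)=d$, because the outermost $\Fclo{\cdot}$ operators have been stripped. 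Hence by the induction hypothesis each such $\FXX$-valid formula is $\FXL$-derivable, so the $\FXX$-derivability-reduction-condition for $\chi_m$ holds. The Upward Derivability Lemma (Lemma~\ref{lemma:Upward derivability}) then yields $\vdash_\FXL\chi_m$. As this holds for every $m\le k$, we get $\vdash_\FXL\chi_0\wedge\dots\wedge\chi_k$, i.e.\ $\vdash_\FXL\phi'$, and therefore $\vdash_\FXL\phi$ via the derivable equivalence. This completes the induction and the proof.

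\paragraph{Main obstacle.} The genuinely hard work --- constructing a model realizing a standard conjunction via the blueprint method (Theorem~\ref{theorem:Realizability of blueprints}) and the careful case analysis in the Downward Validity and Upward Derivability Lemmas --- has already been done. The only point that needs care in assembling the final theorem is the bookkeeping on modal depth: one must be sure that (i) the normal-form transformation does not raise modal depth, and (ii) every formula in the validity-/derivability-reduction-condition of a standard disjunction $\chi$ has modal depth strictly below that of $\chi$, so that the induction hypothesis genuinely applies. Both are immediate from the shape of standard disjunctions (the reduction-condition formulas are built from the $\gamma$, $\phi_i$, $\psi_j$ with no new modalities added), but it is worth stating explicitly, since it is the hinge that turns the two reduction lemmas into a completeness proof.
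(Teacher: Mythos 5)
Your proposal is correct and follows essentially the same route as the paper: induction on modal depth, with the base case handled by propositional logic and the inductive step assembled from the Normal Form Lemma, the Downward Validity Lemma, the induction hypothesis applied to the (strictly lower-depth) reduction-condition formulas, and the Upward Derivability Lemma. The only difference is that you spell out the soundness checks for the three extra axiom schemata, which the paper omits as routine.
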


\begin{proof}
~

Let $\FXX \in \FES$. The soundness of the system for $\FXL$ is easy to show, and we skip its proof.
The completeness is shown as follows.

Let $\phi \in \Phi$. Assume $\models_\FXL \phi$. We want to show $\vdash_\FXL \phi$. We put an induction on the modal depth of $\phi$.

\textbf{Assume the modal depth of $\phi$ is $0$.}

Then $\phi$ is a formula of the classical propositional logic. As $\FXL$ extends the classical propositional logic, $\vdash_\FXL \phi$.

\textbf{Assume the modal depth of $\phi$ is $n$, which is greater than $0$.}

By Lemma \ref{lemma:normal-form}, there is $\phi'$ such that (1) $\vdash_\FXL \phi \leftrightarrow \phi'$, (2) $\phi'$ has the modal depth $n$, and (3) $\phi'$ is in the form of $\chi_0 \land \dots \land \chi_k$, where every $\chi_i$ is a standard disjunction.
By soundness, $\vDash_\FXL \phi \leftrightarrow \phi'$. Then $\vDash_\FXL \phi'$.
Let $i \leq k$. Then $\models_\FXL \chi_i$. It suffices to show $\vdash_\FXL \chi_i$.

Assume the modal degree of $\chi_i$ is less than $n$. By the inductive hypothesis, $\vdash_\FXL \chi_i$.

Assume the modal degree of $\chi_i$ is $n$. Let $\chi_i = \gamma \lor (\FBW_{i \in \FNI} \Fclo{\FAA_i} \phi_i \rightarrow \FBV_{j \in \FPI} \Fclo{\FBB_j} \psi_j)$.

By the downward validity lemma, the $\FXX$-validity-reduction condition of $\chi_i$ is met:
\begin{enumerate}[label=(\arabic*),leftmargin=3.33em]

\item 

$\models_{\FXL} \gamma$, or 

\item 

the following two conditions hold:
\begin{enumerate}

\item 

if $\mathsf{D} \notin \FXX$, then there is $\FNI' \subseteq \FNI$ and $j \in \FPI$ such that $\FNI'$ is $\FXX$-neat, $\FBC_{i\in \FNI'} \FAA_i\subseteq \FBB_j$, and $\models_\FXL \FBW_{i\in \FNI'}\phi_i \to \psi_j$;

\item 

if $\mathsf{D} \in \FXX$, then there is $\FNI'\subseteq \FNI$ and $j\in \FPI$ such that $\FNI'$ is $\FXX$-neat, $\FBC_{i\in \FNI'} \FAA_i \subseteq \FBB_{j}$, and $\models_\FXL \FBW_{i \in \FNI'} \phi_i \to (\psi_{j} \lor \FBV_{k \in \FPI_0} \psi_k)$.

\end{enumerate}

\end{enumerate}

By the inductive hypothesis, the $\FXX$-derivability-reduction condition of $\chi_i$ is met:
\begin{enumerate}[label=(\arabic*),leftmargin=3.33em]

\item 

$\vdash_{\FXL} \gamma$, or 

\item 

the following two conditions hold:
\begin{enumerate}

\item 

if $\mathsf{D} \notin \FXX$, then there is $\FNI' \subseteq \FNI$ and $j \in \FPI$ such that $\FNI'$ is $\FXX$-neat, $\FBC_{i\in \FNI'} \FAA_i\subseteq \FBB_j$, and $\vdash_\FXL \FBW_{i\in \FNI'}\phi_i \to \psi_j$;

\item 

if $\mathsf{D} \in \FXX$, then there is $\FNI'\subseteq \FNI$ and $j\in \FPI$ such that $\FNI'$ is $\FXX$-neat, $\FBC_{i\in \FNI'} \FAA_i \subseteq \FBB_{j}$, and $\vdash_\FXL \FBW_{i \in \FNI'} \phi_i \to (\psi_{j} \lor \FBV_{k \in \FPI_0} \psi_k)$.

\end{enumerate}

\end{enumerate}

By the upward derivability lemma, $\vdash_\FXL \chi_i$.

\end{proof}

\section{Further work}
\label{section:Further work}

There are two important topics for future work.

$\FCL$ is decidable and the satisfiability problem for it is $\mathrm{PSPACE}$~\cite{pauly_modal_2002}. By employing the downward validity lemma and the upward derivability lemma, it can be shown that the other seven coalition logics are also decidable. What are their computational complexities? We leave this as further work.

$\FATL$ is a temporal extension of $\FCL$, whose featured formulas are as follows:
\begin{itemize}

\item 

$\Fclo{\FAA} \mathsf{X} \phi$: some joint strategy of $\FAA$ ensures $\phi$ at the next moment.

\item 

$\Fclo{\FAA} \mathsf{G} \phi$: some joint strategy of $\FAA$ ensures $\phi$ at all moments in the future.

\item 

$\Fclo{\FAA} (\phi \mathsf{U} \psi)$: some joint strategy of $\FAA$ ensures $\phi$ until $\psi$. 

\end{itemize}

\noindent $\FATL$ is a very useful logic. What are the logics that are temporal extensions of the other seven coalition logics? We will study this in the future.

\subsection*{Acknowledgments}

Thanks go to the audience of seminars at the University of Toulouse and the University of Bergen, and the Third International Workshop on Logic of Multi-agent Systems at Zhejiang University, especially Thomas \r{A}gotnes, Rustam Galimullin, and Emiliano Lorini.

\bibliographystyle{alpha}
\bibliography{Strategy-reasoning,Strategy-reasoning-special}

\end{document}